\newtheorem{theorem}{Theorem}[section]
\newtheorem{lemma}[theorem]{Lemma}
\newtheorem{corollary}[theorem]{Corollary}
\newtheorem{definition}{Definition}
\newtheorem{remark}{Remark}
\begin{document}
\doublespacing

\title{An Information Theoretic Study of  Timing Side Channels in Two-user Schedulers
(Draft)}
\author{
\IEEEauthorblockN{Xun~Gong,~\IEEEmembership{Student~Member,~IEEE,}
        Negar~Kiyavash,~\IEEEmembership{Member,~IEEE,}
        and~Parv~Venkitasubramaniam,~\IEEEmembership{Member,~IEEE}}\\
 \thanks{This work was supported in part by National Science Foundation through the grant CCF 10-65022, CCF 10-54937 CAR, and in part by Air Force through the grant FA9550-11-1-0016, FA9550-10-1-0573.
 This work was presented in part at ISIT'11. 
 
 X. Gong is with the Coordinated Science Laboratory and the Department of Electrical and Computer Engineering, University of Illinois at Urbana-Champaign, Urbana, IL 61801 USA (email: \url{xungong1@illinois.edu})
 
 N. Kiyavash is with  the Coordinated Science Laboratory  and the Department of Industrial and Enterprise Systems Engineering, University of Illinois at Urbana-Champaign, Urbana, IL 61801 USA (email: \url{kiyavash@illinois.edu})
 
 P. Venkitasubramaniam is with the Department of Electrical and Computer Engineering, Lehigh University, Bethlehem PA 18015 USA (email: \url{parv.v@lehigh.edu})
 }   
}

\maketitle

\begin{abstract}
Timing side channels in two-user schedulers are studied. 
When two users share a scheduler, one user may learn the other user's behavior from patterns of service timings.
We measure the information leakage of the resulting timing side channel in schedulers serving a legitimate user and a malicious attacker, 
using a privacy metric defined as the Shannon equivocation of the user's job density. 
We show that the commonly used first-come-first-serve (FCFS) scheduler provides no privacy as the attacker is able to to learn the user's job pattern completely. Furthermore, we introduce an scheduling policy, accumulate-and-serve scheduler, which services jobs from the user and attacker in batches after buffering them. 
The information leakage in this scheduler is mitigated at the price of service delays, and the maximum privacy is achievable when large delays are added. 
\end{abstract}

\IEEEpeerreviewmaketitle

\section{Introduction}

\PARstart{T}iming channels are created when information is transmitted in event timings. 
For instance, in packet networks, not only the packets' contents, but also their inter-arrival times can be used to carry information. 
Traditionally, timing channels are synonymous with covert channels, wherein parties are not allowed to communicate yet they do so~\cite{Lampson73}. Take the CPU scheduling channel in multi-level secure computer systems for example~\cite{millen89, Moskowitz96}. By modulating the number of quanta occupying the CPU, one process can signal messages covertly to another with a lower security level.  The recipient process is able to decode these messages from observing the CPU's busy periods. 
More recently, timing channels are also frequently exploited to implement  side channel attacks. 
Unlike covert channels, there is no active message sender in a side channel. Instead, a malicious process may passively or actively learn about the activities of a victim process by utilizing the timing evidence left on the shared resource. For instance, in the aforementioned CPU case, the number of CPU quanta required for completing a job implicitly reveals information about the process issuing it. A malicious process with access to the same CPU, can possibly infer unintended information about the underlying activities of the victim process. 
For example, if the victim process is running a decryption function,  the attacker may learn the encryption key from observing the time it takes for completing the decryption operation~\cite{kocher96, percival05}. 

Timing side channels are increasingly more perilous for user privacy as the result of more of our daily activities moving to networks where coupling of resources is inevitable. 
For example, in the software-as-a-service cloud computing platform deployed by Amazon, a server usually hosts jobs from several clients, which gives  malicious clients  the chance of probing workloads of neighbors~\cite{ristenpart09}. 
A timing side channel was recently discovered within home digital subscriber line (DSL) routers, using which an attacker learns a user's web traffic pattern~\cite{gong12, kadloor10}. 
This attack exploits the fact that packets downloaded in a DSL link are processed through an FCFS buffer, as shown in Figure~\ref{fig:threat_model}. The attacker Bob sends pings (Internet Control Message Protocol (ICMP) requests) to measure round trip times (RTT) for reaching the victim, Alice's, computer.
The ping requests along with Alice's web packets wait in the buffer to be processed. Thus, ping responses are delayed whenever Alice's applications download large volumes of traffic. 
Figure~\ref{fig:trafficvsrtt}  illustrates this scenario. Specifically,  Figure~\ref{fig:yahoo} depicts the RTTs of Bob's ping packets issued  every 10 ms, and Figure~\ref{fig:rtt} shows the data volume downloaded by Alice during the same time period.  It can be clearly seen that Bob's RTTs reveal the pattern of Alice's arrival process which may be further processed to identify the webpage Alice is browsing~\cite{gong12}.

\begin{figure}[t]
   \centering
   \includegraphics[width=0.7\columnwidth]{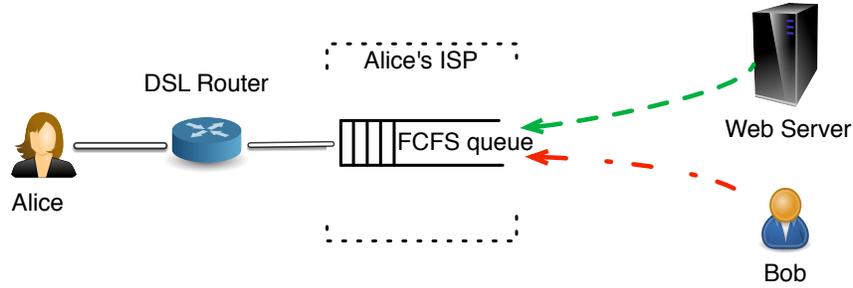} 
    \caption{A timing side channel in DSL routers. The user {\em Alice}'s download packets and the {\em Bob}'s ping packets share services from the DSL router. As a result, the timings of {\em Bob}'s packets convey information regarding sizes of Alice's packets.} 
   \label{fig:threat_model}
\end{figure}
\begin{figure}[t]
   \centering
    \subfigure[{\em Alice's} download pattern]{
    \label{fig:yahoo}
    \includegraphics[width=0.7\columnwidth]{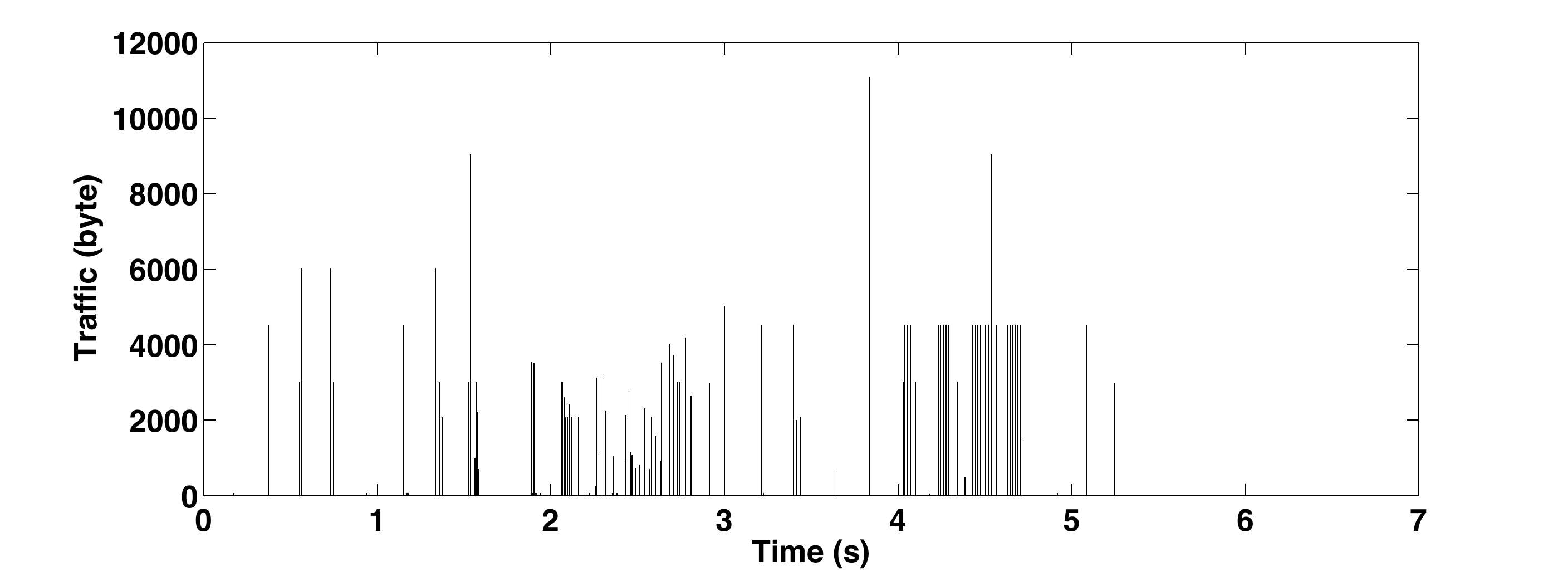}
      } \subfigure[RTTs measured by {\em Bob}]{
       \label{fig:rtt}
    \includegraphics[width=0.7\columnwidth]{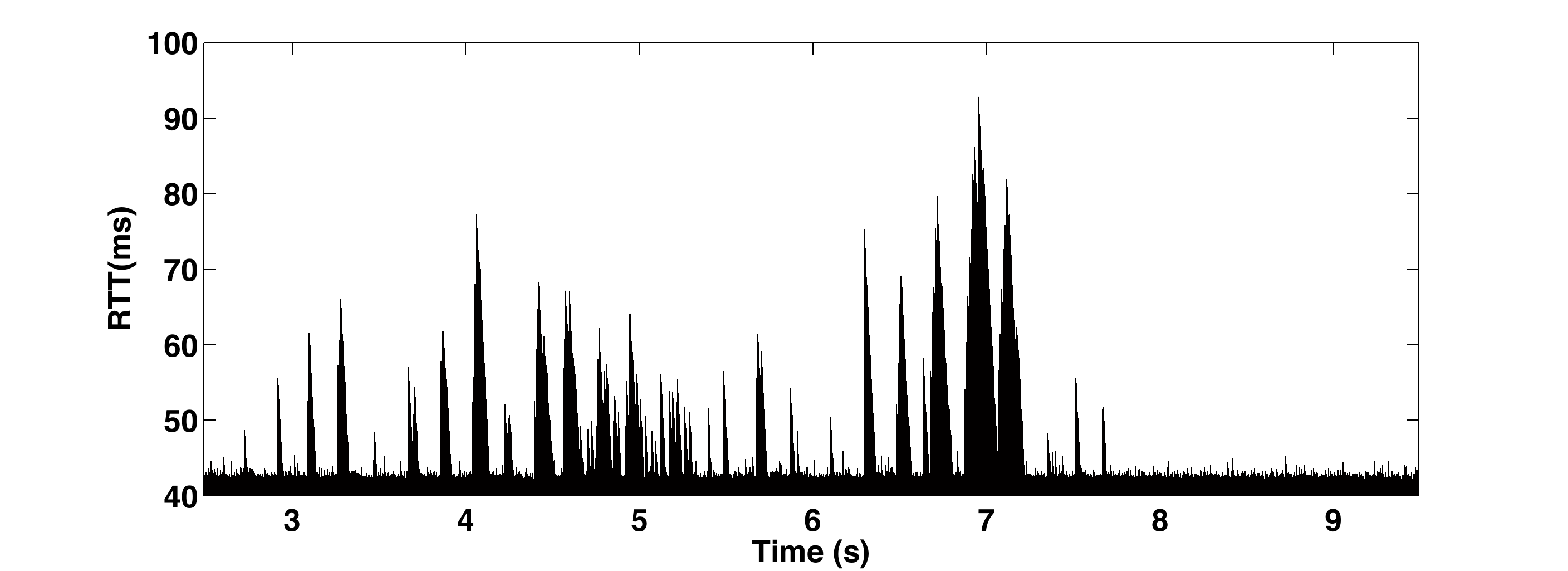}
      }
    \caption{
Information leakage through the timing side channel inside DSLs.    
(a) shows the total size of packets downloaded by Alice during every interval of 10\,ms.
(b) plots the RTT of Bob's ping that is issued every 10\,ms to Alice's computer.}
 \label{fig:trafficvsrtt}
\end{figure}

\subsection{Related Work}
Some noteworthy contributions in analyzing  the communication capacity of covert timing channels include~\cite{millen89, Moskowitz96, anantharam96, Wagner}.
Anantharam and Verd\a'u analyzed the communication capacity of a timing channel for a FCFS queue servicing jobs from one single arrival process~\cite{anantharam96}. 
The communication capacity  in their model depends on the service model; the minimum capacity was shown to be achieved by exponentially distributed service times. 
The communication capacity between job processes of a round-robin CPU scheduler was studied in~\cite{millen89, Moskowitz96}.
Millen proved the maximum timing channel information rate of a round-robin scheduler is  
$\log\left(\frac{1+\sqrt{5}}{2}\right)$ bits per quantum, achieved when the sender uniformly picks an arrival pattern for issuing jobs.
Additionally, techniques to mitigate covert channels were studied in ~\cite{kang93,kang96, Siva, Giles02}, where the main idea is to try to disrupt the communication among the processes by adding `dummy' service delays through an intermediate device, referred to as `pump' or `jammer'. 

We study the timing side channel between two users,  an attacker and a regular user, sending jobs that are scheduled through a shared server. Information  leakage in this side channel is determined by the scheduling policy. In this paper, we quantify information leakage using Shannon equivocation and analyze privacy of commonly used FCFS policy.
Similar studies under this model can be found in~\cite{kadloor12,kadloor13, gong11}, where  minimum-mean-square-error (MMSE) and attack-dependent metrics were used. 

Our main results are summarized in the following.
\begin{itemize}
\item We develop an information-theoretic framework for quantifying information leakage of timing side channels in schedulers using  Shannon's equivocation as a metric to access the privacy level provided by a scheduler (for details on Shannon's equivocation, refer~\cite{ash}).

\item We characterize the information leakage of a FCFS policy and show that the attacker learns the user's arrival pattern exactly if sufficient rate is available to him for sampling the queue. This demonstrates that FCFS is a poor policy in terms of preserving user's privacy despite its ease of implementation, high QoS, and ubiquity.

\item We suggest a policy, accumulate-and-serve, which trades off privacy and QoS (delay) by servicing jobs from the attacker and the user in separate batches buffered periodically. We prove that full privacy is achieved when large delays are added. 
\end{itemize}

The rest of the paper is organized as follows. 
A formal definition of our problem including the system model and metric are discussed in Section~\S\ref{sec:problem}.
Privacy of the FCFS scheduler is analyzed in Section~\S\ref{sec:fcfs}, 
The analysis for the privacy of the accumulate-and-serve policy follows in Section~\S\ref{sec:acc}.
Concluding remarks are presented in Section~\S\ref{sec:con}.

\section{Problem Formulation}
\label{sec:problem}

In this section, the problem formulation and notation are introduced.
Throughout the paper: bold script $\mathbf{A}$ denotes the infinite sequence  $\{A_1, A_2, \cdots\}$,  $\mathbf{A}^n$ denotes the sequence  $\{A_1, A_2, \cdots, A_n\}$, and $\mathbf{A}^{j}_{i}$ denotes the subsequence $\{A_{i}, A_{i+1}, \cdots, A_{j}\}$, where $j\geq i$. 

 \begin{figure}[t]
   \centering
   \includegraphics[width=0.5\columnwidth]{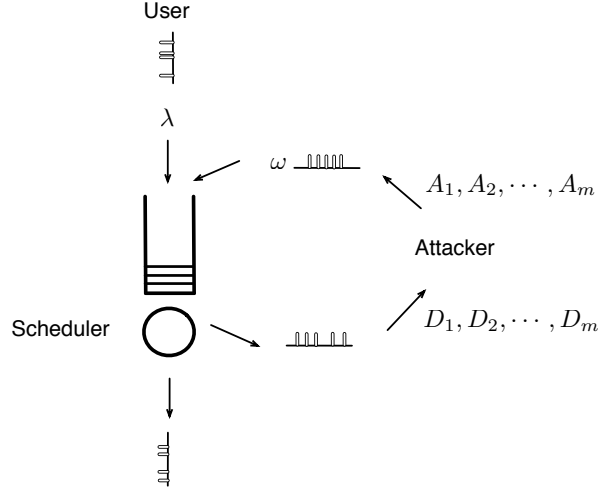} 
   \caption{
   A scheduler services  jobs from two arrival processes;  one is from a malicious  attacker who wants to probe the job pattern of the other, a legitimate user.  The attacker sends jobs to the scheduler to get knowledge of the queue status, based on which it infers the legitimate user's privacy. }
   \label{fig:sys_mod}
\end{figure}

\subsection{System Model}
\label{sec:sys_mod}

\begin{figure}[t] 
   \centering
   \includegraphics[width=0.7\columnwidth]{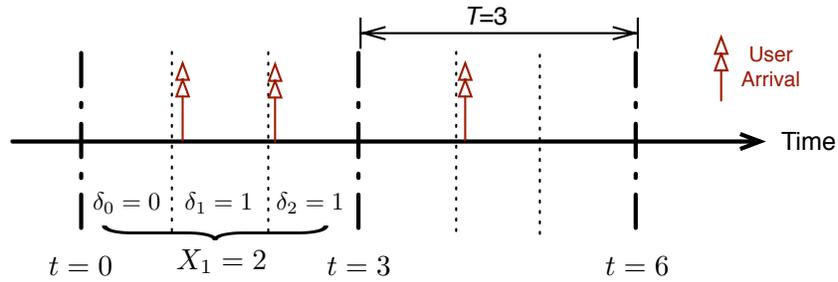} 
   \caption{The user's job pattern $\mathbf{X}=\{X_1, X_2, \cdots\}$ defined as the number of jobs in every clock period of $T$ time slots.
Here $T=3$,  $X_1=2$,  and $X_2=1$. } 
      \label{fig:metric}
\end{figure}

Figure~\ref{fig:sys_mod} depicts the shared scheduler which processes jobs from a regular user and an attacker in discrete time.  
At every time slot,  the user (and the attacker) can either issue one job or remain idle. All jobs have the same size and take one slot to get serviced.
We assume that the user's arrival process is Bernoulli with rate $\lambda$. 
Note that the difficulty in learning user's arrival pattern depends on the unpredictability of the pattern.
 It is easier for the attacker to learn the arrival pattern of the user if the user issues jobs in a predicable or regular pattern such as ON/OFF traffic with a fixed period.  On the other hand, the {\em Bernoulli process} is quite unpredictable as it is the maximum entropy discrete time stationary process for a fixed arrival rate (similar results for Poisson arrivals can be found in~\cite{mcfadden65}).
The attacker is allowed to send his jobs in any time slots as long as his long term rate  $\omega$
does not  exceed  $1-\lambda$, so as to avoid an unstable queue. Unlike in a denial of a service attack, in a side channel attack, the attacker does not benefit from overloading the server which results in dropping packets and hence loosing information.

The goal of the attacker is to learn the user's  job pattern $\mathbf{X}=\{X_1,X_2,\cdots\}$, as depicted in Figure~\ref{fig:metric}. 
\begin{definition}
The user's job pattern in the $k^{th}$ clock period is given by
\begin{equation}
X_k= \sum_{j=(k-1)T}^{kT-1}\delta_j,  \quad k=1,2,\cdots,
\label{eq:pattern_1}
\end{equation}
where $\delta_j\sim Bernoulli(\lambda), j=1,2,\cdots,$ labels the arrival event from the user at every time slot.
\label{def:pattern}
\end{definition}
The job pattern $\mathbf{X}$ presents a sampled view of the user's arrival process, with an observation every $T$ time slots. 
Among all the sampling sequences with rate $\frac{1}{T}$, the evenly-paced sampling captures the maximum information of the original Bernoulli process~\cite{xunDoc}. This implies that  $\mathbf{X}$ serves a proper objective for an attacker who wants to know as accurate information about user. 
The clock period $T$ sets the granularity of this side channel attack. 
A smaller value of $T$ indicates that the attacker intends to obtain a higher resolution view of the user's activity.
In the extreme case of $T=1$, the attacker wants to know learn whether a job was issued by the user in every single time slot.

\subsection{Privacy Metric}
\label{sec:metric}

We measure the user's privacy in the shared scheduler as the equivocation rate of his job pattern given the attacker's observations of his own jobs. 
Shannon equivocation is frequently used as a metric for information leakage in communication systems, such as the wiretap channel~\cite{Wyner75}.
Beside being a measure of uncertainty, equivocation provides a tight upper and lower bound for the minimum error probability~\cite{feder94}, which implies our proposed metric also bounds the error the attacker incurs in guessing user's job pattern. Such an error has been studied using a minimum-mean-square-error (MMSE) metric in~\cite{kadloor12, kadloor13}. 

Denote the arrival and departure times of attacker's jobs by $\mathbf{A}=\{A_1,A_2,\cdots\}$ and $\mathbf{D}=\{D_1,D_2,\cdots\}$ respectively. The attacker's arrival rate can be represented by $\omega =\underset{k\to\infty}{\lim}\frac{k}{A_k}$.
Suppose $m$ attacker's jobs were issued during the first $n$ clock periods, the uncertainty of the first $n$ job patterns to the attacker is then $H\left(\mathbf{X}^{n}| \mathbf{A}^m,\mathbf{D}^m\right)$. 
\begin{definition}
The user's {\em privacy} in a shared scheduler serving him and an attacker is given by
\begin{equation}
\begin{aligned}
\mathcal{P}^{T}&=\underset{\mathbf{A}:\omega<1-\lambda}{\min}\quad
\underset{n\to\infty}{\lim}\frac{H\left(\mathbf{X}^{n}| \mathbf{A}^m,\mathbf{D}^m\right)}{n},
\end{aligned}\label{eq:privacy}
\end{equation}
where $m=\underset{k}{\sup}\{k:A_k\leq nT\}$.
\label{def:privacy}
\end{definition}

$\mathcal{P}^T$ characterizes  the minimum equivocation of the user's job patterns for the best attack strategy satisfying rate restriction. 
The smaller the privacy $\mathcal{P}^T$, more the information that is leaked to the attacker through the timing side channel in the scheduler. 
A similar equivocation-based metric was proposed in~\cite{gong11}, where however the attacker's strategy is restricted as 
Bernoulli sampling and the metric was attack-dependent.

The value of $\mathcal{P}^{T}$ is largely determined by the policy  the scheduler uses. 
If the scheduler preassigns  fixed time slots to service each party, as in  TDMA, user's privacy is guaranteed because 
the service time of attacker's jobs is statistically independent with user's job patterns. 
In that case,
TDMA achieves the maximum privacy, as given by 
\begin{equation}
\mathcal{P}^{T}_{TDMA}= H({X}),
\label{eq:tdma}
\end{equation}
where $X$ is binomial $B(T,\lambda)$. 
%
However, such a policy results in idling of scheduler which  wastes resources and may add significant delays.  
Therefore, complete isolation of users' job processes is often not achievable in practice as the scheduler is required to maintain a certain level of QoS, such as average job delay. In such cases, a timing side channel is inevitable.
In the next section, we analyze the leakage of such a channel for FCFS scheduler which is widely deployed in practice.

\section{Information Leakage  in First-Come-First-Serve Scheduler}
\label{sec:fcfs}

FCFS is a simple scheduling policy commonly used in network systems. 
At each time slot, the FCFS scheduler services the job at the head of the queue. 
In the rest of this paper, for the sake of convenience of analysis, we assume that when both user and attacker issue a job in the same time slot, the attacker's job enters the queue first. 
As the scheduler never idles as long as the job queue is not empty, FCFS results in minimum average delay.\footnote{This is true when all jobs have the same size.}
However, FCFS exposes the queue length $q(\cdot)$ of the buffer to the attacker as the delay of the $i^{th}$ attacker's job is directly related to the number of jobs buffered before its arrival, i.e.,
\begin{equation}
q(A_i)=D_i-A_i-1,  
\label{eq:queue_service}
\end{equation}
where `1' accounts for the service time of the $i^{th}$ attacker's job itself. 
We subsequently show that by using a well designed attack strategy, the attacker can indeed significantly reduce the timing privacy of the user.

\subsection{Attack Strategy}
\label{sec:strategy}

Recall the attacker's objective is to learn the user's job pattern, i.e., the number of user's arrivals within each clock period $T$.
Therefore, given~\eqref{eq:queue_service}, the attacker should issue jobs at times $t=0, T, 2T,\cdots,$ to know the queue lengths on the clock period boundaries, which are essential to accurately estimating user's job pattern. 
Based on this observation, we design an attack strategy (Figure~\ref{fig:attack_strategy}), where
the attacker's jobs are of two types:
\begin{itemize}
\item {\em Type-I} jobs are issued on boundaries of clock periods, $t=0, T, 2T,\cdots$; 

\item {\em Type-II} jobs are issued on slots inside a clock period according to a Bernoulli process.
We use the remaining rate after issuing {\em Type-I} jobs to issue them. The probability of having one {\em Type-II} job in each slot within a clock period is 
$\frac{\omega T-1}{T-1}.
\label{eq:attack_rate}$
\end{itemize}
The purpose of issuing {\em Type-II} jobs is to ensure the queue is not empty. This reason for this becomes apparent in~\S\ref{sec:fcfs_res}.

\begin{figure}[t]
   \centering
   \includegraphics[width=0.7\columnwidth]{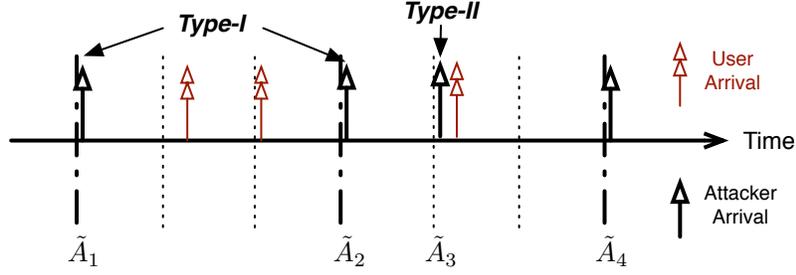} 
   \caption{Our attack strategy: the attacker issues  a {\em Type-I}  job on every clock tick and a {\em Type-II} job on each slot between clock ticks with probability of $\frac{\omega T-1}{T-1}$. }
   \label{fig:attack_strategy}
\end{figure}

The attack strategy in Figure~\ref{fig:attack_strategy} is not  feasible when $T$ is too small, as the attacker's rate is bounded by $1-\lambda$.
This poses an intrinsic limit on how much the attacker can learn from the side channel. More precisely, the attacker can not expect to learn user's job pattern at a resolution finer than his maximum sampling rate. 
For this reason, for the rest of our analysis, we consider learning the job pattern within the feasible resolution of the attacker; i.e., $T\geq \left \lfloor \frac{1}{1-\lambda} \right \rfloor$.

\subsection{An Upper Bound on Privacy}
\label{sec:upperbound}
The  attack strategy described in~\S\ref{sec:strategy} guarantees certain level of information gain for the attacker, and therefore sets an upper bound on the user's privacy in this side channel. 
Denote in this attack strategy $\mathbf{\tilde{A}}=\{\tilde{A}_1,\tilde{A}_2,\cdots\}$ the arrival times of attacker's jobs, and 
$\mathbf{\tilde{D}}=\{\tilde{D}_1,\tilde{D}_2,\cdots \}$ the corresponding departure times.
 From~\eqref{eq:privacy}, we know
\begin{equation}
\mathcal{P}^{T}\leq \underset{\omega:\omega<1-\lambda}{\min}\quad
\underset{n\to\infty}{\lim}  \frac{ H\left(\mathbf{X}^{n}| \mathbf{\tilde{A}}^{\tilde{m}},\mathbf{\tilde{D}}^{\tilde{m}}\right)}{n},
\label{eq:privacy_1}
\end{equation}
where $\tilde{m}=\sup\{k:\tilde{A}_k\leq nT\}$ is the total number of jobs sent by the attacker over period $nT$.

We first analyze the queuing stability of the scheduler under this attack strategy.
\begin{lemma}
When $\lambda+\omega<1$, the queue lengths observed at clock period boundaries $\{q(iT)\}, i=0, 1,\cdots,$ form a positive recurrent Markov chain.
\label{lem:stability}
\end{lemma}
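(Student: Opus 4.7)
The plan is to reduce the evolution of $\{q(iT)\}$ to a classical Lindley-type recursion driven by i.i.d.\ inputs, and then invoke Foster's drift criterion together with the negative-drift condition $\lambda+\omega<1$.

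First, I would fix one clock period and bookkeep arrivals versus departures. Let $N_i$ denote the total number of jobs (user plus attacker) that enter the queue during the slots $iT,iT+1,\dots,(i+1)T-1$. Since one full period contains $T$ service slots and the scheduler never idles when the queue is non-empty, a standard argument gives the recursion
\begin{equation}
q((i+1)T)=\max\bigl(q(iT)+N_i-T,\,0\bigr).
\label{eq:lindley}
\end{equation}
The random variable $N_i$ is the sum of: one Type-I arrival at slot $iT$ (deterministic), $T\lambda$ expected Bernoulli$(\lambda)$ user arrivals across the $T$ slots, and $T-1$ Bernoulli$\bigl(\tfrac{\omega T-1}{T-1}\bigr)$ Type-II attacker arrivals across the interior slots. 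All of these are mutually independent by construction, and the blocks indexed by different $i$ are independent as well, so $\{N_i\}_{i\ge 0}$ is an i.i.d.\ sequence. Combined with \eqref{eq:lindley}, this establishes that $\{q(iT)\}$ is a time-homogeneous Markov chain on $\mathbb{Z}_{\ge 0}$.

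Next I would check irreducibility and aperiodicity. With probability $(1-\lambda)^T(1-\tfrac{\omega T-1}{T-1})^{T-1}>0$ one has $N_i=1$ (only the Type-I job), so from any state $q\ge 1$ the chain can step down to $q-(T-1)$ and, iterated, reach $0$; conversely, $N_i$ can exceed $T$ with positive probability, so any state above the current one is reachable in finitely many steps. The self-loop at $0$ has positive probability, which yields aperiodicity.

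For positive recurrence I would apply Foster's criterion with the Lyapunov function $V(q)=q$. For $q\ge T$, the $\max$ in \eqref{eq:lindley} is inactive, so
\begin{equation}
\mathbb{E}\bigl[V(q((i+1)T))-V(q(iT))\,\big|\,q(iT)=q\bigr]=\mathbb{E}[N_i]-T = T(\lambda+\omega)-1\cdot 0 -T = -T\bigl(1-\lambda-\omega\bigr),
\end{equation}
which is a strictly negative constant under the hypothesis $\lambda+\omega<1$. For $q<T$ the one-step expected value of $V$ is bounded by $\mathbb{E}[N_i]+T$, a finite constant. Hence Foster's criterion holds with the finite exceptional set $\{0,1,\dots,T-1\}$, and the chain is positive recurrent.

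The only subtle point I anticipate is the precise accounting of the Type-I job at the boundary slot $iT$: depending on whether it is considered part of period $i-1$ or period $i$, the recursion \eqref{eq:lindley} needs to be checked carefully, but either convention yields the same $\mathbb{E}[N_i]=T(\lambda+\omega)$ and hence the same negative drift. Everything else is routine once the i.i.d.\ structure of $N_i$ is in hand.
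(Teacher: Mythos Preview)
Your approach is the paper's approach: both use Foster's criterion with the linear Lyapunov function $V(q)=q$, obtain drift $-T(1-\lambda-\omega)$ outside a finite set, and bound the drift inside. The Markov property is argued in both cases from the i.i.d.\ block structure of the arrivals.

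There is, however, one slip. The Lindley recursion $q((i+1)T)=\max\bigl(q(iT)+N_i-T,\,0\bigr)$ is \emph{not} valid here in general. The $N_i$ arrivals are spread across the $T$ slots rather than arriving in bulk at time $iT$; the server can therefore idle mid-period and then receive late arrivals, and those idle slots are lost service capacity. Consequently the true $q((i+1)T)$ can strictly exceed $\max(q(iT)+N_i-T,0)$. (Concrete example with $T=4$, $q(iT)=0$: two arrivals at slot $iT$, none at $iT+1,iT+2$, two at $iT+3$; then $N_i=4$ but $q((i+1)T)=1$, not $0$.) The paper sidesteps this by only asserting the clean equality $q((i+1)T)=q(iT)+N_i-T$ when $q(iT)\ge T-1$: together with the guaranteed Type-I arrival at slot $iT$, this forces the buffer to stay non-empty throughout the period, so no service is wasted.

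Fortunately your Foster argument survives unchanged once you restrict the equality to $q\ge T-1$: the drift is still exactly $-T(1-\lambda-\omega)$ there, and on the finite set $\{0,\dots,T-2\}$ you only need the crude bound $|q((i+1)T)-q(iT)|\le T$, which holds regardless of the precise recursion. Your irreducibility argument is also unaffected, since in the case $N_i=1$ the sole arrival is the Type-I job at the boundary and no mid-period idling issue arises. So the fix is local: replace the global Lindley claim by the conditional statement for $q(iT)\ge T-1$, and the rest goes through as written.
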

\begin{proof}
See~Appendix~\ref{app:system_dynamic}.
\end{proof}

\begin{corollary}
When $\lambda+\omega<1$, the pairs 
$\left\{\tilde{A}_k, q(\tilde{A}_k); (i-1)T\leq \tilde{A}_k \leq iT\right\}, i=1,2,\cdots,$ form a positive recurrent Markov chain.  
\label{cor:stability}
\end{corollary}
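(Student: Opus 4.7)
The plan is to reduce the corollary to Lemma~\ref{lem:stability} by viewing each clock period as a ``block'' whose internal evolution is a deterministic function of the queue length at the left boundary plus independent innovations. Concretely, I would group the pairs falling in the $i$-th period into a single random vector
\[
V_i \;=\; \bigl\{(\tilde{A}_k, q(\tilde{A}_k))\,:\,(i-1)T\leq \tilde{A}_k\leq iT\bigr\},
\]
and argue that $\{V_i\}_{i\geq 1}$ is the Markov chain asserted by the statement. The key idea is that within a clock period the attacker's Type-II arrivals are i.i.d.\ Bernoulli$\bigl(\tfrac{\omega T-1}{T-1}\bigr)$ by construction, the user's arrivals are i.i.d.\ Bernoulli$(\lambda)$, and both innovation processes are independent of everything that happened in earlier periods.

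Next I would verify the Markov property. Given $V_{i-1}$, the only information about the past that affects period $i$ is $q((i-1)T)$, which equals $q(\tilde{A}_k)$ for the last entry of $V_{i-1}$ (the Type-I job at the right boundary of period $i-1$ is also the Type-I job at the left boundary of period $i$). Write the recursion $q(t+1)=\max\{q(t)+\alpha_t+\beta_t-1,0\}$, where $\alpha_t\sim\mathrm{Bernoulli}(\lambda)$ is the user's innovation and $\beta_t$ is the attacker's innovation (deterministically $1$ on clock ticks, Bernoulli$\bigl(\tfrac{\omega T-1}{T-1}\bigr)$ otherwise). Iterating this recursion from $t=(i-1)T$ to $t=iT$ shows that $V_i$ is a measurable function of $q((i-1)T)$ and the innovations in slot interval $[(i-1)T,iT)$, the latter being independent of $V_1,\dots,V_{i-2}$. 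Hence $P(V_i\in\cdot\mid V_{i-1},\dots,V_1)=P(V_i\in\cdot\mid V_{i-1})$.

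For positive recurrence, I would project back onto the boundary chain. The map $\phi(V_i)=q(iT)$ (the last coordinate of the last pair) is a deterministic function, so any return time of $\{V_i\}$ to a fixed state $v$ coincides with a return time of $\{q(iT)\}$ to the state $\phi(v)$, preceded by a specific configuration of innovations of probability bounded away from zero (depending only on $\phi(v)$ and $v$). Lemma~\ref{lem:stability} gives finite expected return time for the boundary chain under $\lambda+\omega<1$; multiplying by the (finite) reciprocal of the innovation probability yields a finite expected return time for $\{V_i\}$, establishing positive recurrence. Irreducibility on the accessible state space follows from the corresponding property of the boundary chain together with the full-support innovations.

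The main obstacle I anticipate is purely bookkeeping: being precise about the overlap at clock ticks (the Type-I job at $t=iT$ belongs to both $V_i$ and $V_{i+1}$ under the stated indexing, so one must show the shared coordinate is consistent and does not destroy the Markov property) and about the state space of $\{V_i\}$ (it is countable because both the length of the block and the queue lengths are integer-valued, but one should note that the block length has geometric-like tails, so the state space, while infinite, is still covered by the positive-recurrence argument above). No new probabilistic input beyond Lemma~\ref{lem:stability} is needed.
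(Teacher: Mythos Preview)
Your reduction to Lemma~\ref{lem:stability} via the block vectors $V_i$ is correct, and your Markov-property argument is essentially the same as the paper's (both rest on the observation that $V_i$ is a measurable function of the boundary value $q((i-1)T)$ and the fresh Bernoulli innovations in $[(i-1)T,iT)$). Where you diverge is in the positive-recurrence step. You give a direct return-time bound: wait for the boundary chain $\{q(iT)\}$ to revisit the correct value, then with a fixed positive probability the period's innovations reproduce the full internal configuration of $v$; a Wald-type argument then gives $\mathbb{E}[\tau_v]\leq (1/p)\cdot \mathbb{E}[\text{boundary return time}]<\infty$. The paper instead observes that the outgoing transition from $V_i$ depends only on the last coordinate $q(iT)$, so the stationary distribution of $\{q(iT)\}$ (furnished by Lemma~\ref{lem:stability}) can be pushed forward through one period of innovations to yield a stationary distribution for $\{V_i\}$; existence of a stationary law is then invoked as equivalent to positive recurrence. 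Your route is more constructive and makes the dependence on the innovation probability explicit, at the cost of a little extra bookkeeping (the geometric-trials coupling); the paper's route is shorter and avoids return times entirely, but relies on the cited equivalence. Either argument suffices, and no additional probabilistic input beyond Lemma~\ref{lem:stability} is needed in either case. One small tightening for your version: when you write ``any return time of $\{V_i\}$ to $v$ coincides with a return time of $\{q(iT)\}$ to $\phi(v)$, preceded by a specific configuration,'' it would be cleaner to project onto the \emph{left} boundary $q((i-1)T)$ (which is a coordinate of $V_{i-1}$ and of $V_i$) rather than the right, so that the ``specific configuration'' is exactly the innovation pattern in the current period and the Wald argument goes through without further unpacking.
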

\begin{proof}
See~Appendix~\ref{app:system_dynamic_2}.
\end{proof}

\begin{remark}
Lemma~\ref{lem:stability} and Corollary~\ref{cor:stability} demonstrate the existence of steady state of the scheduler's queue length. 
This implies the convergence of the limit of the equivocation rate in~\eqref{eq:privacy_1}, using which we derive an upper bound on user's privacy. 
\end{remark}

\begin{lemma}
Consider the FCFS scheduler with the total  job rate $\lambda+\omega<1$, where $\lambda$ and $\omega$ denote the user's arrival rate and attacker's arrival rate respectively. 
The user's privacy is upper-bounded by\begin{equation}
\begin{aligned}
\mathcal{P}^{T}\leq \underset{\omega:\omega<1-\lambda}{\min} \mathbb{E}_{s}\left[\sum_{i=1}^{s+1} H\left(\mathcal{X}_i|\tau_i, Q_{i},Q_{i+1}\right)\right],
\label{eq:privacy_convergence}
\end{aligned}
\end{equation}
where $s$ is binomial $B\left(T-1,\frac{\omega T-1}{T-1}\right)$, $\tau_i$ is geometric $G\left(\frac{\omega T-1}{T-1}\right)$, and $\mathcal{X}_i$ is binomial $B(\tau_i,\lambda)$ for  $i=1,2,\cdots,s+1$. Moreover, \begin{equation}
\sum_{i=1}^{s+1} \tau_i=T, 
\end{equation} and
$Q_1$, $Q_{s+2}$ are identically distributed and
 \begin{equation}Q_{i+1} = \left(Q_{i} +\chi_i+1-\tau_i \right)_{+}, \quad \forall i=1,2,\cdots,s+1.
 \label{eq:Q}
\end{equation}
\label{cor:upper_bound}
\end{lemma}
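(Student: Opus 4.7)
The plan is to start from the conditional entropy upper bound \eqref{eq:privacy_1} for the fixed attack of \S\ref{sec:strategy} and refine it in four steps: (i) decompose $H(\mathbf{X}^n\mid\mathbf{\tilde{A}}^{\tilde{m}},\mathbf{\tilde{D}}^{\tilde{m}})$ across clock periods; (ii) inside each period split $X_k$ over the sub-intervals separated by attacker arrivals; (iii) discard all conditioning except the triple $(\tau_i,Q_i,Q_{i+1})$; and (iv) pass to the stationary regime using Lemma~\ref{lem:stability} and Corollary~\ref{cor:stability}.

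For step (i), let $\mathcal{O}_k$ denote the portion of $(\mathbf{\tilde{A}},\mathbf{\tilde{D}})$ that corresponds to attacker arrivals lying in the $k$-th clock period, so that $(\mathbf{\tilde{A}}^{\tilde{m}},\mathbf{\tilde{D}}^{\tilde{m}}) = (\mathcal{O}_1,\ldots,\mathcal{O}_n)$. The chain rule together with ``conditioning reduces entropy'' yields
\[
H(\mathbf{X}^n\mid\mathbf{\tilde{A}}^{\tilde{m}},\mathbf{\tilde{D}}^{\tilde{m}}) \;=\; \sum_{k=1}^n H(X_k\mid\mathbf{X}^{k-1},\mathcal{O}_1,\ldots,\mathcal{O}_n) \;\leq\; \sum_{k=1}^n H(X_k\mid\mathcal{O}_k).
\]
For step (ii), write $X_k=\sum_{i=1}^{s_k+1}\mathcal{X}_i^{(k)}$ with $\mathcal{X}_i^{(k)}$ the number of user arrivals during the $i$-th inter-attacker-arrival interval, and apply the chain rule once more to get $H(X_k\mid\mathcal{O}_k)\leq \sum_{i=1}^{s_k+1} H(\mathcal{X}_i^{(k)}\mid\mathcal{O}_k)$, using that $X_k$ is a deterministic function of the $\mathcal{X}_i^{(k)}$'s.

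Step (iii) is where \eqref{eq:queue_service} enters: the attacker reconstructs each queue length $Q_i^{(k)}$ from $\mathcal{O}_k$ as $\tilde{D}_i^{(k)}-\tilde{A}_i^{(k)}-1$, and the inter-arrival lengths $\tau_i^{(k)}$ are already in $\mathcal{O}_k$. Hence $(\tau_i^{(k)},Q_i^{(k)},Q_{i+1}^{(k)})$ is a deterministic function of $\mathcal{O}_k$, and dropping the remaining conditioning produces $H(\mathcal{X}_i^{(k)}\mid\mathcal{O}_k)\leq H(\mathcal{X}_i^{(k)}\mid\tau_i^{(k)},Q_i^{(k)},Q_{i+1}^{(k)})$. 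The recursion \eqref{eq:Q} is just the one-step FCFS update over the $\tau_i$ slots of that sub-interval: the $Q_i$ pre-existing jobs plus the $\mathcal{X}_i$ fresh user jobs plus the one attacker job, minus up to $\tau_i$ completions, clipped at zero---exactly the dynamics already tracked in Lemma~\ref{lem:stability} and Corollary~\ref{cor:stability}.

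Step (iv) is the main obstacle. By Corollary~\ref{cor:stability} the per-period sequence of attacker arrivals and their queue lengths is a positive recurrent Markov chain, so the ergodic theorem converts the Cesaro average $\frac{1}{n}\sum_{k=1}^n\sum_{i=1}^{s_k+1}H(\mathcal{X}_i^{(k)}\mid\tau_i^{(k)},Q_i^{(k)},Q_{i+1}^{(k)})$ into the stationary expectation $\mathbb{E}_s\bigl[\sum_{i=1}^{s+1}H(\mathcal{X}_i\mid\tau_i,Q_i,Q_{i+1})\bigr]$. The marginal laws $s\sim B(T-1,\tfrac{\omega T-1}{T-1})$, $\tau_i\sim G(\tfrac{\omega T-1}{T-1})$ and $\mathcal{X}_i\mid\tau_i\sim B(\tau_i,\lambda)$, together with $\sum_i\tau_i=T$ and the identical distribution of $Q_1$ and $Q_{s+2}$, fall out of the Type-I/Type-II construction and of the independence of user and attacker streams. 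The delicate point is the exchange of limit and expectation, because $Q_i$ is unbounded; I would handle this by the crude domination $H(\mathcal{X}_i\mid\tau_i,Q_i,Q_{i+1})\leq \log(1+\tau_i)$ and the geometric tail of $\tau_i$ to invoke dominated convergence. Taking the infimum over admissible $\omega\in(0,1-\lambda)$ then produces \eqref{eq:privacy_convergence}.
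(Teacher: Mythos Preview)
Your proposal is correct and follows essentially the same route as the paper: chain rule across clock periods, reduction to per-period observations via \eqref{eq:queue_service}, splitting $X_k$ over the sub-intervals between attacker arrivals, and passage to stationarity through Corollary~\ref{cor:stability} together with a Ces\`aro argument. The only notable differences are cosmetic: the paper keeps \emph{equalities} at the reduction steps (their (b) and (c)) by invoking the Markov structure of the queue rather than merely dropping conditioning, and it postpones the sum-splitting $H(\sum_i\mathcal{X}_i\mid\cdot)\le\sum_iH(\mathcal{X}_i\mid\cdot)$ until \emph{after} the limit has been taken, whereas you split first; either ordering yields a valid upper bound. One small imprecision to fix: for $Q_{s_k+2}^{(k)}$ to be recoverable from $\mathcal{O}_k$ in step~(iii), the Type-I arrival at time $kT$ (and its departure) must belong to $\mathcal{O}_k$, so adjacent $\mathcal{O}_k$'s should share the boundary job rather than form a strict partition as your equation $(\mathbf{\tilde{A}}^{\tilde m},\mathbf{\tilde{D}}^{\tilde m})=(\mathcal{O}_1,\ldots,\mathcal{O}_n)$ suggests.
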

\begin{proof}
Expand the conditional entropy in~\eqref{eq:privacy_1} using the entropy chain rule:
\begin{equation}
\begin{aligned}
H\left(\mathbf{X}^n\big| \mathbf{\tilde{A}}^{\tilde{m}},\mathbf{\tilde{D}}^{\tilde{m}}\right) &=  \sum_{k=1}^{{n}} H\left(X_k\big|\mathbf{X}^{k-1},  \mathbf{\tilde{A}}^{\tilde{m}},\mathbf{\tilde{D}}^{\tilde{m}} \right)\\
&\overset{(a)}{=}  \sum_{k=1}^{{n}} H\left(X_k\big|\mathbf{X}^{k-1},  \mathbf{\tilde{A}}^{\tilde{m}}, q(\tilde{A}_1), \cdots, q(\tilde{A}_{\tilde{m}}) \right)
\end{aligned}
\label{eq:cond_entropy_chain}
\end{equation}
where $(a)$ follows from~\eqref{eq:queue_service}.

Denote the total number of user's jobs sent during $[\tilde{A}_i,\tilde{A}_{i+1})$ (between two consecutive attack jobs) by $\hat{X}_{i}$. Note that 
\begin{equation}
\hat{X}_i= \sum_{j=\tilde{A}_i}^{\tilde{A}_{i+1}-1}\delta_j \quad, i=1,2,\cdots,
\label{eq:pattern_2}
\end{equation} 
wherein $\delta_j$ is a $Bernoulli(\lambda)$ indicator of whether user issued a job at the $j^{th}$ time slot. 

From~\eqref{eq:pattern_1}, we know
\begin{equation}
X_k=\underset{i:(k-1)T\leq \tilde{A}_i < kT}{\sum} \hat{X}_i, \quad k= 1,2,\cdots.
\label{eq:pattern_connect}
\end{equation}
Plug~\eqref{eq:pattern_connect} into~\eqref{eq:cond_entropy_chain}, 
\begin{equation}
\begin{aligned}
H\left(\mathbf{X}^n\big| \mathbf{\tilde{A}}^{\tilde{m}},\mathbf{\tilde{D}}^{\tilde{m}}\right)&\overset{}{=}  \sum_{k=1}^{{n}} H\left(\underset{i:(k-1)T\leq \tilde{A}_i < kT}{\sum} \hat{X}_i\Big| \mathbf{X}^{k-1},  \mathbf{\tilde{A}}^{\tilde{m}}, q(\tilde{A}_1), \cdots, q(\tilde{A}_{\tilde{m}}) \right)\\
&\overset{(b)}{=}\sum_{k=1}^n H\left(\underset{i:(k-1)T\leq \tilde{A}_i < kT}{\sum} \hat{X}_i\Big| \tilde{A}_i, q(\tilde{A}_i); (k-1)T\leq \tilde{A}_i \leq kT\right )
\end{aligned}
\label{eq:cond_entropy_chain_2}
\end{equation}
where $(b)$ holds because  the queue length update equation at the clock boundaries is given by 
\begin{equation}
q(\tilde{A}_{i+1})=\left(q(\tilde{A}_{i})+1+\hat{X}_{i}-(\tilde{A}_{i+1}-\tilde{A}_{i})\right)_{+}, i=1,2,\cdots.
\label{eq:q_update_2}
\end{equation}

From Corollary~\ref{cor:stability}, we know  that $\left\{\tilde{A}_i, q(\tilde{A}_i), (k-1)T\leq \tilde{A}_i \leq kT\right\}$, $k=1,2,\cdots,$ form a positive recurrent Markov chain.
Hence, the conditional entropy term in the sum of the last line in~\eqref{eq:cond_entropy_chain_2} converges as $k\to\infty$, with the limit determined by the stationary distribution of state $\{\tilde{A}_i, q(\tilde{A}_i); (k-1)T\leq \tilde{A}_i \leq kT\}$. 
Assume this chain is in the stationary state, and let
\begin{itemize}
\item[$s$] $=\Big|\left\{i: (k-1)T< \tilde{A}_i < kT\right\}\Big|$ be the the number of {\em Type-II} attack jobs issued in a clock period $T$. Then $s \sim B\left(T-1,\frac{\omega T-1}{T-1}\right)$ (for the attack strategy defined in \S\ref{sec:strategy});
\item[$\tau_i$]  $\sim G\left(\frac{\omega T-1}{T-1}\right), i=1,2,\cdots,s+1$ denote inter-arrival time of attacker's jobs in a clock period $T$.
Clearly, sum of these inter-arrival times equals $T$; 
\item[$\mathcal{X}_i$] $\sim B(\tau_i,\lambda), i=1,2,\cdots,s+1,$ be the number of user's jobs arriving between every pair of consecutive attacker's jobs; 
\item[$\mathbf{Q}^{s+2}$]  denote the queue lengths seen by the total $s+2$ attacker's jobs sent in $[(k-1)T, kT]$ ($s$ of {\em Type-II} and 2 of {\em Type-I}).
The queue length in sequence $\mathbf{Q}^{s+2}$  updates following from~\eqref{eq:Q}. Moreover, in the stationary state, queue lengths at clock period boundaries--$Q_1$, $Q_{s+2}$--have identical distribution. 
\end{itemize}

Then, we can write the  limit of the conditional entropy  in~\eqref{eq:cond_entropy_chain_2}  as 
\begin{equation} 
\begin{aligned}
\lim_{k \to \infty}&H\left(\underset{i:(k-1)T\leq \tilde{A}_i \leq kT}{\sum} \hat{X}_i\Big| \tilde{A}_i, q(\tilde{A}_i); (k-1)T\leq \tilde{A}_i \leq kT\right )
= \mathbb{E}_{s}\left[H\left(\sum_{i=1}^{s+1}\mathcal{X}_i  \Big| \mathbf{\tau}^{s+1}, \mathbf{Q}^{s+2} \right)\right].
\label{eq:entropy_limit_1}
\end{aligned}
\end{equation}

Substituting~\eqref{eq:cond_entropy_chain_2} back into~\eqref{eq:privacy_1}, and applying~\eqref{eq:entropy_limit_1} and Ces\`{a}ro mean theorem \cite[Theorem~4.2.3]{thomas91}, 
\begin{equation}
\begin{aligned}
\mathcal{P}^{T} & \leq \underset{\omega:\omega<1-\lambda}{\min} \mathbb{E}_{s}\left[H\left(\sum_{i=1}^{s+1}\mathcal{X}_i  | \mathbf{\tau}^{s+1}, \mathbf{Q}^{s+2} \right)\right]\\
&\leq \underset{\omega:\omega<1-\lambda}{\min}\mathbb{E}_{s}\left[\sum_{i=1}^{s+1}H\left(\mathcal{X}_i|\mathbf{\tau}^{s+1}, \mathbf{Q}^{s+2}\right)\right]\\
&\overset{(c)}{=}\underset{\omega:\omega<1-\lambda}{\min}\mathbb{E}_{s}\left[\sum_{i=1}^{s+1}H\left(\mathcal{X}_i|\tau_i, Q_{i},Q_{i+1} \right)\right]. 
\end{aligned}
\label{eq:upper_bound}
\end{equation}
where $(c)$ is based on the queue length update equation~\eqref{eq:Q}.
\end{proof}

\subsection{FCFS Provides Zero Limiting Privacy}
\label{sec:fcfs_res}
We next show that the bound in Lemma~\ref{cor:upper_bound} converges to 0 as attacker's job rate approcaches $1-\lambda$. Therefore, FCFS scheduler provides no privacy. 

\begin{lemma}
If the current arrival of the attacker  sees a non-empty queue, he learns the exact number of jobs the user has issued between the attacker's current and previous jobs, or
\begin{equation}
H\left(\mathcal{X}_i|\tau_i, Q_{i},Q_{i+1} \right)= 0, \quad \text{ if } Q_{i+1}>0, i=1,2,\cdots, s+2,
\label{eq:consecutive}
\end{equation}
for all $0 \leq s\leq T-1$.
\label{lem:consecutive} 
\end{lemma}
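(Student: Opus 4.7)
The plan is to invoke the queue-length recursion \eqref{eq:Q}, namely
\[
Q_{i+1} = \left(Q_{i}+\mathcal{X}_i+1-\tau_i\right)_{+},
\]
and observe that the clipping to zero in the positive-part operator is the only source of information loss: whenever it is active, several distinct values of $\mathcal{X}_i$ collapse onto the same next-state $Q_{i+1}=0$, but whenever it is inactive, the map becomes invertible in $\mathcal{X}_i$.

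I would proceed as follows. First, I would fix $i$ with $1\leq i\leq s+2$ and condition on the hypothesis $Q_{i+1}>0$. Under this hypothesis the positive-part operator in \eqref{eq:Q} is inactive, so
\[
Q_{i+1} = Q_{i}+\mathcal{X}_i+1-\tau_i.
\]
Solving for $\mathcal{X}_i$ gives
\[
\mathcal{X}_i = Q_{i+1}-Q_{i}-1+\tau_i,
\]
which exhibits $\mathcal{X}_i$ as a deterministic function of the conditioning triple $(\tau_i,Q_i,Q_{i+1})$. Consequently $H(\mathcal{X}_i\mid\tau_i,Q_i,Q_{i+1})=0$ on the event $\{Q_{i+1}>0\}$, which is exactly \eqref{eq:consecutive}.

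I would then briefly explain the intuition: between two consecutive attacker arrivals, the server processes exactly $\tau_i$ jobs provided the queue never empties during this interval; the condition $Q_{i+1}>0$ certifies that the queue does not empty at the end, and since the attacker's job at the beginning of the interval adds $1$ to the queue, the net balance $Q_{i+1}-Q_i$ is $\mathcal{X}_i+1-\tau_i$. Hence the attacker recovers $\mathcal{X}_i$ exactly by simple bookkeeping of queue lengths and inter-arrival times.

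There is essentially no obstacle here: the statement is a one-line consequence of the fact that $(x)_+$ is injective on the set where $x\geq 0$. The only thing to be careful about is that the index range $i=1,\ldots,s+2$ in the statement is one larger than the range $i=1,\ldots,s+1$ appearing in the recursion \eqref{eq:Q}; I would clarify that the case $i=s+2$ is handled identically using the recursion extended to the next clock period, which is well-defined in the stationary regime guaranteed by Lemma~\ref{lem:stability} and Corollary~\ref{cor:stability}.
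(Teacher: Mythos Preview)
Your proposal is correct and is essentially identical to the paper's own proof: both simply observe from \eqref{eq:Q} that when $Q_{i+1}>0$ the positive-part operator is inactive, so $\mathcal{X}_i=Q_{i+1}-Q_i+\tau_i-1$ is determined by the conditioning variables and the conditional entropy vanishes. Your remark about the index range $i=s+2$ is a fair observation about a minor looseness in the statement that the paper itself does not address.
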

\begin{proof}
From~\eqref{eq:Q} when $Q_{i+1}>0$, we have \begin{equation}\mathcal{X}_i=Q_{i+1}-Q_{i}+\tau_i-1, \end{equation} 
 which implies $H\left(\mathcal{X}_i|\tau_i, Q_{i},Q_{i+1} \right)=0$. 
\end{proof}

\begin{remark}
The intuition provided by Lemma~\ref{lem:consecutive} is that when the queue is nonempty, the attacker does not miss the legitimate user's arrivals. Therefore, the attacker has the incentive to issue as many jobs as possible to create a queue. This is the motivation for issuing  {\em Type-II} jobs in our attack strategy of \S\ref{sec:strategy}. 
\end{remark}
When the attacker makes full use of available rate, he always sees a busy scheduler, as stated in the next lemma.

\begin{lemma}
In a FCFS scheduler, an attacker issuing jobs according to time sequence $\mathbf{\tilde{A}}$ as described in~\S\ref{sec:strategy} at the maximum available rate rarely sees an empty queue, or
\begin{equation}
\lim_{\omega\to 1-\lambda}Pr(Q_i > 0)=1, \quad  i=1,2,\cdots, s+2,
\label{eq:queue_length}
\end{equation}
for all $0 \leq s\leq T-1$.
\label{lem:queue_length}
\end{lemma}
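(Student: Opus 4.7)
The plan is to show that the stationary probability $\rho_t := \Pr(q(t)=0)$ tends to $0$ as $\omega \to 1-\lambda$ for every slot $t \in \{0,1,\ldots,T-1\}$ at which the attacker may fire, namely $t=0$ (the Type-I boundary) and every interior $t \in \{1,\ldots,T-1\}$ (Type-II). Because $Q_i = q(\tilde A_i)$ and $\tilde A_i \bmod T$ always lies in this set, this yields the lemma. Lemma~\ref{lem:stability} and Corollary~\ref{cor:stability} give the cyclostationary steady state in which these marginals are well defined, and the open-loop nature of the attack ensures that the total arrival indicator $A(t) \in \{0,1,2\}$ at slot $t$ is independent of $q(t)$.

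The first step is a throughput-conservation identity. A slot is idle exactly when $q(t)=0$ and $A(t)=0$, and long-run throughput equals long-run arrival rate; more formally, applying the ergodic theorem to the recurrent chain of Corollary~\ref{cor:stability} and telescoping the bounded increment $q(t+1) - q(t) = A(t) - 1 + \mathbf{1}\{q(t)=0,\,A(t)=0\}$ across a single clock period yields
\[
\frac{1}{T}\sum_{t=0}^{T-1} \Pr\bigl(q(t)=0,\,A(t)=0\bigr) \;=\; 1-\lambda-\omega.
\]
At $t=0$ the attacker's Type-I job always fires, so $A(0)\geq 1$ and that slot contributes zero. For each interior $t$, independence factors the joint probability as $\rho_t \cdot (1-\lambda)T(1-\omega)/(T-1)$, and rearranging gives
\[
\sum_{t=1}^{T-1} \rho_t \;=\; \frac{(T-1)(1-\lambda-\omega)}{(1-\lambda)(1-\omega)}.
\]
As $\omega \to 1-\lambda$ the numerator vanishes while the denominator stays bounded away from $0$ (tending to $\lambda(1-\lambda)>0$), so each non-negative $\rho_t$ must itself tend to $0$ for $t=1,\ldots,T-1$. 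This handles every Type-II arrival.

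For the remaining Type-I slot $t=0$, use that $A(0)\geq 1$ forces the recursion to collapse to $q(1)=q(0)+A(0)-1$, so $q(1)=0$ is equivalent to $q(0)=0$ together with $A(0)=1$. Independence of $A(0)$ and $q(0)$ then gives $\rho_1 = (1-\lambda)\rho_0$, and hence $\rho_0 = \rho_1/(1-\lambda)\to 0$. Combining with the previous paragraph yields $\Pr(Q_i>0) = 1 - \rho_{\tilde A_i \bmod T}\to 1$ for every $i$.

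The main obstacle is that the balance identity is completely blind to the boundary slot --- its contribution is identically zero because the Type-I arrival is deterministic --- so it cannot by itself constrain $\rho_0$; the one-step comparison to slot $t=1$ above is what rescues this case. A secondary technicality is that positive recurrence alone does not supply a finite $\mathbb{E}[q(t)]$, so the telescoping identity should be justified via the ergodic theorem applied to the bounded quantity $q(t+1)-q(t)$, rather than by formally taking expectations of $q(t)$.
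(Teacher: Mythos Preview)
Your argument is correct and takes a genuinely different route from the paper's. The paper reduces the claim to showing $\Pr(Q_1\ge T-1)\to 1$, then analyzes a ``virtual'' bursty attack via the $z$-transform of the boundary queue-length chain and finally invokes a sample-path domination argument to transfer the conclusion to the real attack. Your proof bypasses all of this machinery: the idle-slot/throughput identity $\frac{1}{T}\sum_{t}\Pr(q(t)=0,A(t)=0)=1-\lambda-\omega$, together with independence of $A(t)$ and $q(t)$, immediately forces $\sum_{t=1}^{T-1}\rho_t\to 0$, and the one-step relation $\rho_1=(1-\lambda)\rho_0$ handles the boundary slot that the balance identity cannot see. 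This is shorter and more robust (no generating functions, no coupling), and it directly yields $\rho_t\to 0$ at \emph{every} phase rather than only the coarser event $\{Q_1\ge T-1\}$; the paper's route, on the other hand, gives more explicit quantitative control on $\Pr(Q_1<T-1)$ at finite $\omega$.

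One small point worth tightening: the concluding line ``$\Pr(Q_i>0)=1-\rho_{\tilde A_i\bmod T}$'' is not literally correct for Type-II jobs, because conditioning on the $i$th attacker job landing at phase $t$ reveals that no Type-II job arrived in phases $1,\dots,t-1$, which does influence $q(t)$. The clean fix is the union bound you essentially already have in hand: since $\sum_{t=0}^{T-1}\rho_t\to 0$, the event $\{\min_{0\le t\le T-1}q(t)>0\}$ has probability tending to $1$, and on that event every $Q_i>0$ regardless of where the attacker's jobs fall. This is exactly parallel to how the paper uses $\{Q_1\ge T-1\}$ as a blanket sufficient event.
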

\begin{proof}
From~\eqref{eq:Q}, if the attacker sees a queue length greater than $T-1$ at the clock boundary, then all its jobs arriving in the  the following clock period will experience a nonempty queue. 
Hence,  a sufficient statement for~\eqref{eq:queue_length}  to hold is that  
\begin{equation} \lim_{\omega\to 1-\lambda}Pr(Q_1 \geq T-1)=1.\label{eq:Q_limit}\end{equation}
See Lemma~\ref{lem:app_1} in Appendix~\ref{app:queue_length} for a proof of~\eqref{eq:Q_limit}.
\end{proof}

Based on Lemma~\ref{cor:upper_bound}, Lemma~\ref{lem:consecutive}, and Lemma~\ref{lem:queue_length}, we now present the main theorem characterizing the privacy behavior of FCFS schedulers. 

\begin{theorem}
The FCFS scheduler provides no privacy of user's job patterns, or
\begin{equation}
\mathcal{P}^{T}=0.
\end{equation}
\label{theo:main}
\end{theorem}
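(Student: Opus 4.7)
The plan is to chain the three preceding lemmas together: the upper bound from Lemma \ref{cor:upper_bound}, the zero-entropy identity from Lemma \ref{lem:consecutive}, and the queue-saturation statement from Lemma \ref{lem:queue_length}. Since $\mathcal{P}^{T}\geq 0$ by definition as an equivocation, it suffices to show the upper bound in Lemma \ref{cor:upper_bound} can be driven to $0$ by choosing $\omega$ close enough to $1-\lambda$. In particular, I would not minimize over $\omega$ explicitly, but rather take a limit as $\omega\to 1-\lambda$ and show the bound vanishes.

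First, for every fixed $s\in\{0,1,\ldots,T-1\}$ and every $i\in\{1,\ldots,s+1\}$, I would split each conditional entropy term based on whether $Q_{i+1}=0$ or $Q_{i+1}>0$:
\begin{equation*}
H(\mathcal{X}_i\mid \tau_i, Q_i, Q_{i+1}) = H(\mathcal{X}_i\mid \tau_i, Q_i, Q_{i+1}, \mathbf{1}_{Q_{i+1}>0}).
\end{equation*}
On the event $\{Q_{i+1}>0\}$, Lemma \ref{lem:consecutive} gives that this conditional entropy equals $0$. On the complementary event $\{Q_{i+1}=0\}$, I use the crude bound $H(\mathcal{X}_i\mid\cdot)\leq \log(\tau_i+1)\leq \log(T+1)$, since $\mathcal{X}_i$ is binomial with at most $\tau_i\leq T$ trials. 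Combining these two pieces yields
\begin{equation*}
H(\mathcal{X}_i\mid \tau_i, Q_i, Q_{i+1}) \leq \log(T+1)\cdot \Pr(Q_{i+1}=0).
\end{equation*}

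Next I would plug this into the bound of Lemma \ref{cor:upper_bound}. Since the sum has $s+1\leq T$ terms and $s\leq T-1$ deterministically,
\begin{equation*}
\mathbb{E}_{s}\left[\sum_{i=1}^{s+1}H(\mathcal{X}_i\mid \tau_i, Q_i, Q_{i+1})\right] \leq T\log(T+1)\cdot \max_{1\leq i\leq T+1}\Pr(Q_{i}=0).
\end{equation*}
Lemma \ref{lem:queue_length} asserts that this maximum probability tends to $0$ as $\omega\to 1-\lambda$, uniformly over the finitely many indices $i\in\{1,\ldots,T+1\}$. Taking the limit therefore gives
\begin{equation*}
\underset{\omega:\omega<1-\lambda}{\min}\ \mathbb{E}_{s}\left[\sum_{i=1}^{s+1}H(\mathcal{X}_i\mid \tau_i, Q_i, Q_{i+1})\right] = 0,
\end{equation*}
and combined with $\mathcal{P}^T\geq 0$ we conclude $\mathcal{P}^T=0$.

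I do not expect a serious obstacle: all the heavy lifting—establishing stationarity of the queue (Lemma \ref{lem:stability} and Corollary \ref{cor:stability}), turning the equivocation into the bound involving $(\mathcal{X}_i,\tau_i,Q_i,Q_{i+1})$ (Lemma \ref{cor:upper_bound}), the pointwise zero-entropy identity (Lemma \ref{lem:consecutive}), and the saturation of the queue (Lemma \ref{lem:queue_length})—is already in place. The only mild subtlety is ensuring the finite bound $T\log(T+1)$ can be pulled out uniformly in $s$ so that the expectation and the limit in $\omega$ exchange; this is immediate because $T$ is fixed and $s\leq T-1$ is bounded, so dominated convergence (or a direct finite sum argument) applies without difficulty.
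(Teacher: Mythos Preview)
Your proposal is correct and follows essentially the same route as the paper: split each $H(\mathcal{X}_i\mid\tau_i,Q_i,Q_{i+1})$ according to whether $Q_{i+1}>0$, invoke Lemma~\ref{lem:consecutive} on the nonempty-queue event, and use Lemma~\ref{lem:queue_length} to kill the remaining term as $\omega\to 1-\lambda$. The only difference is cosmetic: you make the uniform bound $H(\mathcal{X}_i\mid\cdot)\leq\log(T+1)$ explicit to justify the limit exchange, whereas the paper leaves this implicit.
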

\begin{proof}
From~\eqref{eq:privacy_convergence},
\begin{equation}
\begin{aligned}
\mathcal{P}^{T}&\leq  \underset{\omega:\omega<1-\lambda}{\min}\mathbb{E}_{s}\left[\sum_{i=1}^{s+1}H\left(\mathcal{X}_i|\tau_i, Q_{i},Q_{i+1} \right)\right]\\
&\overset{}{\leq} \lim_{\omega\to 1-\lambda} \mathbb{E}_{s}\left[\sum_{i=1}^{s+1} H\left(\mathcal{X}_i|\tau_i, Q_{i},Q_{i+1}\right)\right]\\
&\overset{(d)}{=} \lim_{\omega\to 1-\lambda} \mathbb{E}_{s}\left[\sum_{i=1}^{s+1} Pr(Q_{i+1}=0)  H\left(\mathcal{X}_i|\tau_i, Q_{i},Q_{i+1}=0\right)\right]\\
&\overset{}{=} \lim_{\omega\to 1-\lambda} \mathbb{E}_{s}\left[\sum_{i=1}^{s+1} \left(1-Pr(Q_{i+1}>0)\right)  H\left(\mathcal{X}_i|\tau_i, Q_{i},Q_{i+1}=0\right)\right]\\
&\overset{(e)}{=}0
\label{eq:temp3}
\end{aligned}
\end{equation}
where $(d)$ follows from~\eqref{eq:consecutive}, and $(e)$ results from~\eqref{eq:queue_length}.
Since $\mathcal{P}^{T}$ defined in~\eqref{eq:privacy}, cannot be negative, we must have $\mathcal{P}^{T}=0$. 
\end{proof}

\section{An Accumulate-and-Serve Scheduler}
\label{sec:acc}

As we showed in the previous section, FCFS preserves little privacy despite its QoS and complexity advantages. In this section, we propose a new policy, accumulate-and-serve that mitigates the side channel information leakage by adding service delays. 
This scheduling policy is similar to the periodic dump jammer previously proposed to mitigate covert channels~\cite{Giles02}. 
By buffering jobs periodically and servicing batches belonging to different users separately, the correlation between the attacker's departure process and user's arrival process is greatly reduced. As a result, the accumulate-and-serve scheduler gives the attacker a coarser view of the user's job patterns,  compared to a FCFS scheduler.

Our accumulate-and-serve scheduler  works as follows: 
time slots are divided into intervals with length of $T_{acc}$. 
The scheduler accumulates all jobs that have arrived during an interval into two batches; one consisting of the user's jobs  and the other containing all jobs from the attacker.
Then the scheduler starts servicing these two batches starting at the next available time slot (after completing all previously scheduled jobs).
The order at which the user and the attacker get served is fixed for all the accumulate intervals.

An example of accumulate-and-serve scheduler is shown in Figure~\ref{fig:acc}, where the accumulate interval is set as $T_{acc}=9$.
In Figure~\ref{fig:acc},  the scheduler first waits for $9$ time slots, and then starts processing the accumulated jobs, 4 from the user and 3 from the attacker, at $t=9$ in two batches. The service order in this example is giving priority to the user's job batches. 

\begin{figure}[t]
   \centering
   \includegraphics[width=\columnwidth]{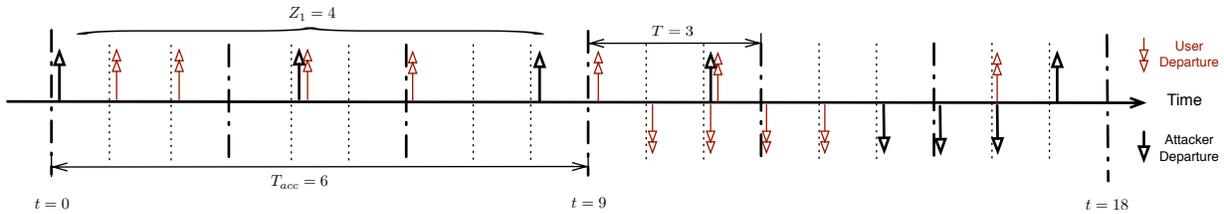} 
   \caption{An example of the accumulate-and-serve policy. At the beginning of each interval, the scheduler services immediately all jobs which arrived during the previous interval in two batches; first one user and then the other. In this example, the accumulate interval $T_{acc}=3T$, and the user's jobs are serviced first.}
   \label{fig:acc}
\end{figure}

\subsection{The Limitation of Attacks on Accumulate-and-Server Schedulers}

Under the accumulate-and-serve policy, the correlation between the user and attacker's processes is only through the size of 
 job batches. 
Therefore, the attacker can at most  learn the total size of user's jobs in each batch, and not the arrival pattern inside the accumulate period. In other words, the accumulate interval $T_{acc}$ sets an upper bound on the resolution to which the attacker can learn user's job pattern.

Denote the user's job pattern within one accumulate interval by $Z_k$,
\begin{equation}
Z_k= \sum_{j=(k-1)T_{acc}}^{kT_{acc}-1}\delta_j, \quad k=1,2,\cdots.
\label{eq:def_z}
\end{equation}
\begin{lemma}
In an accumulate-and-serve scheduler  serving a user and an attacker, the attacker's observation, the sequence $\mathbf{Z}$ and user's job pattern  form a Markov chain, i.e., 
\begin{equation}
\mathbf{A},\mathbf{D}\to \mathbf{Z} \to \mathbf{X}
\label{eq:acc_chain}
\end{equation}
\label{lem:acc_chain}
\end{lemma}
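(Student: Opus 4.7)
The plan is to show two things: first, that the attacker's departure sequence $\mathbf{D}$ is a deterministic function of $(\mathbf{A},\mathbf{Z})$; second, that conditional on $\mathbf{Z}$, the finer-grained pattern $\mathbf{X}$ is independent of everything the attacker sees. Combining these two facts yields the required conditional independence $\mathbf{X}\perp(\mathbf{A},\mathbf{D})\mid\mathbf{Z}$, which is precisely the Markov chain in the statement.

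For the first step, I would write out the scheduler recursion explicitly. Let $W_k=|\{i:(k-1)T_{acc}\le A_i<kT_{acc}\}|$ denote the number of attacker arrivals falling in the $k$-th accumulate interval; this is a function of $\mathbf{A}$ alone. Maintain a ``next-free-slot'' variable $F_k$ with recursion $F_{k+1}=\max(F_k,kT_{acc})+Z_k+W_k$, reflecting that at time $kT_{acc}$ the server inserts the two batches of sizes $Z_k$ and $W_k$ in the fixed user/attacker order after finishing any spillover. Since all jobs in a batch are serviced contiguously and in a predetermined order, the departure time of each attacker job in interval $k$'s batch is a deterministic function of $F_k$, $Z_k$, $W_k$, and its rank within the attacker batch. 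Iterating the recursion gives $\mathbf{D}=g(\mathbf{A},\mathbf{Z})$ for an explicit $g$, with no direct dependence on the fine-grained pattern $\mathbf{X}$.

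For the second step, I would note that $\mathbf{X}$ and $\mathbf{Z}$ are both derived from the same underlying Bernoulli process $\{\delta_j\}$: $Z_k$ aggregates $\delta_j$ over the length-$T_{acc}$ accumulate interval, while each $X_\ell$ aggregates over a sub-interval of length $T$. Conditioned on the interval totals $\mathbf{Z}$, the positions of the arrivals within each accumulate interval are exchangeable and independent across intervals, and therefore so are the per-subinterval counts $\{X_\ell\}$ comprising $\mathbf{X}$. Crucially, this conditional law involves only the user's Bernoulli randomness and is therefore independent of the attacker's arrival process $\mathbf{A}$ (including any private randomization it may use).

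The main obstacle is accommodating an adaptive attacker whose $A_k$ can depend on past observed departures $D_1,\dots,D_{k-1}$; naively one might worry this correlates $\mathbf{A}$ with $\mathbf{X}$. The resolution is provided by step one: past departures are $(\mathbf{A},\mathbf{Z})$-measurable, so any dependence of $\mathbf{A}$ on the user-side process runs only through $\mathbf{Z}$. Formalizing this amounts to conditioning jointly on $\mathbf{Z}$ and the attacker's private randomness and invoking the independence of $\{\delta_j\}$ from that randomness to conclude $P(\mathbf{X}\in\cdot\mid\mathbf{Z},\mathbf{A},\mathbf{D})=P(\mathbf{X}\in\cdot\mid\mathbf{Z})$, which is the desired Markov property.
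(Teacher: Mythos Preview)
Your proposal is correct and follows essentially the same route as the paper: both arguments hinge on writing the attacker's departure times via a recursion that depends on the user's process only through the batch totals $\mathbf{Z}$, hence $\mathbf{D}=g(\mathbf{A},\mathbf{Z})$. The paper's version is terser---it writes the recursion for the first departure in each batch (distinguishing the two priority orders) and then declares $\mathbf{Z}$ a sufficient statistic---whereas you additionally spell out the conditional-independence step $\mathbf{A}\perp\mathbf{X}\mid\mathbf{Z}$ and handle the adaptive-attacker case that the paper leaves implicit; this extra care is warranted but does not change the underlying idea.
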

\begin{proof}
Assume that the first attacker's job in the $k^{th}$ accumulate interval arrives at time $A_{k^*}$, its departure time depends on whether the scheduler  gives priority to the user's job batch or the attacker's.
If it is the former, we have
\begin{equation}
D_{k^*} = \max\{D_{k^{*}-1}, kT_{acc}\}+Z_k+1,
\label{eq:head_batch_1}
\end{equation}
where $D_{k^{*}-1}$ is the time the services of all previously accumulated jobs are completed, and $Z_k$ is the total service time of user's job batch accumulated in the current $k^{th}$ interval. 
If attacker's job batch receives services first, then the scheduler finishes all jobs from previous accumulate intervals by $D_{k^{*}-1}+Z_{k-1}$, and starts immediately to serve the newly buffered jobs from the attacker, in which case
\begin{equation}
D_{k^*} = \max\{D_{k^{*}-1}+Z_{k-1}, kT_{acc}\}+1.
\label{eq:head_batch_2}
\end{equation}

Once $D_{k^*}$ is determined, the rest of the jobs in the $k^{th}$ batch of attacker are serviced back to back. 
Equations \eqref{eq:head_batch_1} and~\eqref{eq:head_batch_2} imply that the sequence $\mathbf{Z}$ is a sufficient statistic of $\mathbf{X}$ to generate the departure times of the attacker. Thus \eqref{eq:acc_chain} holds. 
\end{proof}

\begin{remark}
Lemma~\ref{lem:acc_chain} imposes an upper bound on the information leakage or equivalently a lower bound on the privacy. Specifically, it implies that  the attacker learns no more than information about $\mathbf{X}$ than what is contained in the sequence $\mathbf{Z}$. Note that from~\eqref{eq:def_z}, $H(\mathbf{Z})$ is a monotonically decreasing function of the accumulate interval $T_{acc}$. 
Therefore the scheduler can mitigate the leakage by picking a large accumulate period $T_{acc}$ albeit at the price of delay. 
\end{remark}

\subsection{A Lower Bound on Privacy}

The lower bound on privacy provided by the accumulate-and-serve scheduler is given in the following theorem.

\begin{theorem}
In an accumulate-and-serve scheduler with $T_{acc}>T$, the user's privacy is lower bounded by
\begin{equation}
\mathcal{P}^{T}\geq \left(1-\frac{T}{T_{acc}}+\frac{T}{T_{L}}\right)H(X) - \frac{T\cdot H\left(\sum_{i=1}^{\left\lfloor\frac{T_{acc}}{T}\right\rfloor}X_i\right)}{T_{acc}},
\label{eq:acc_bound}
\end{equation}
where $T_{L}=lcm\left(T,T_{acc}\right)$, and $X, X_1,\cdots, X_{\left\lfloor\frac{T_{acc}}{T}\right\rfloor}$ are all i.i.d. binomial $B(T,\lambda).$
\label{theo:acc_bound}
\end{theorem}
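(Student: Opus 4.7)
The plan is to invoke Lemma~\ref{lem:acc_chain} together with the data processing inequality to obtain $H(\mathbf{X}^n|\mathbf{A}^m,\mathbf{D}^m)\geq H(\mathbf{X}^n|\mathbf{Z}^N)$ with $N\approx nT/T_{acc}$, and then to reduce the calculation to a single super-period of length $T_L=\mathrm{lcm}(T,T_{acc})$. Since both the clock grid and the accumulate grid repeat with period $T_L$ and the underlying $\delta_j$'s are i.i.d.\ Bernoulli, $(\mathbf{X},\mathbf{Z})$ restricted to consecutive super-periods is i.i.d., so a Ces\`{a}ro argument gives $\mathcal{P}^T\geq H(\mathbf{X}_{sp}|\mathbf{Z}_{sp})/(T_L/T)$, where the subscript $sp$ denotes a single super-period.

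Within one super-period I would classify each of the $T_L/T$ clock periods as \emph{aligned} (entirely inside one accumulate period) or \emph{straddling} (split by an accumulate boundary). A boundary count yields $T_L/T-T_L/T_{acc}+1$ aligned clock periods and $T_L/T_{acc}-1$ straddling ones, so dividing the first by $T_L/T$ already produces the coefficient $1-T/T_{acc}+T/T_L$ of $H(X)$ appearing in the target bound. I would then apply a genie step: let $\boldsymbol{\xi}$ reveal every straddling $X_k$ together with how many of its arrivals fall on each side of the accumulate boundary that cuts it. Conditioning on more cannot raise entropy, so $H(\mathbf{X}_{sp}|\mathbf{Z}_{sp})\geq H(\mathbf{X}_{\mathcal{A}}|\mathbf{Z}_{sp},\boldsymbol{\xi})$ with $\mathcal{A}$ the set of aligned indices; because $\boldsymbol{\xi}$ pins down the straddling contribution $s_j$ to each accumulate period $j$, the sole constraint left by $\mathbf{Z}_{sp}$ is that the $k_j$ aligned $X_k$'s in period $j$ sum to $Z_j-s_j$.

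Aligned $X_k$'s in distinct accumulate periods live in disjoint time windows and are therefore independent, so the conditional entropy decouples across $j$. In period $j$ the aligned terms are i.i.d.\ $B(T,\lambda)$ with sum law $B(k_jT,\lambda)$, so the per-period contribution is $k_jH(X)-H(B(k_jT,\lambda))$, giving a total of $(T_L/T-T_L/T_{acc}+1)H(X)-\sum_{j}H(B(k_jT,\lambda))$. A short mod-$T$ arithmetic using $k_j=\lfloor(j+1)T_{acc}/T\rfloor-\lceil jT_{acc}/T\rceil$ and $T_{acc}=kT+r$ with $k=\lfloor T_{acc}/T\rfloor$ and $0\leq r<T$ shows $k_j\in\{k-1,k\}$ for every $j$, so monotonicity of $H(B(n,\lambda))$ in $n$---which is immediate from $H(X+Y)\geq H(Y)$ for independent $X,Y$---yields $H(B(k_jT,\lambda))\leq H(B(kT,\lambda))=H(\sum_{i=1}^{k}X_i)$. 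Hence $\sum_{j}H(B(k_jT,\lambda))\leq(T_L/T_{acc})\,H(\sum_{i=1}^{k}X_i)$, and dividing through by $T_L/T$ recovers the claimed lower bound on $\mathcal{P}^T$.

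The main obstacle is to justify the genie step rigorously: one must verify that conditioning on $\boldsymbol{\xi}$, on top of $\mathbf{Z}_{sp}$, truly factorizes the aligned entropies across accumulate periods and that the per-period conditional law is exactly that of $k_j$ i.i.d.\ binomials constrained to a given sum. A secondary technical point is the Ces\`{a}ro/super-period reduction: since $nT$ is generally not an integer multiple of $T_L$, the end-of-horizon discrepancy must be absorbed into a $o(n)$ term so that the per-clock-period entropy rate equals its per-super-period value.
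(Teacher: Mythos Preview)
Your proposal is correct and follows essentially the same route as the paper: reduce via Lemma~\ref{lem:acc_chain} to $H(\mathbf{X}|\mathbf{Z})$, pass to a single super-period of length $T_L=\mathrm{lcm}(T,T_{acc})$, give the attacker the split of each straddling clock period as genie side information (the paper calls these $\mathbf{Y}$), factorize over accumulate intervals, and bound each per-interval term using $k_j\leq\lfloor T_{acc}/T\rfloor$. The only cosmetic differences are that the paper first treats the divisible case $T_{acc}=lT$ separately and that it handles the horizon by taking $n$ super-periods directly rather than invoking a Ces\`aro/$o(n)$ argument.
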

\begin{proof}
We first prove the case of  $T_{acc}=lT$, wherein $l$ is a positive integer, in which case the bound in~\eqref{eq:acc_bound} reduces to
\begin{equation}
\mathcal{P}^{T}\geq H(X) - \frac{H\left(\sum_{i=1}^{l}X_i\right)}{l}.
\label{eq:acc_bound_2}
\end{equation}

Applying the Markov chain of~\eqref{eq:acc_chain} to the privacy definition of~\eqref{eq:privacy}, and considering the equivocation of the first $nT_{acc}$ time slots, we have
\begin{equation}
\mathcal{P}^{T}\geq \lim_{n\to \infty} \frac{H\left(\mathbf{X}^{nl}|\mathbf{Z}^{n}\right)}{nl}.
\label{eq:acc_bound_2}
\end{equation}

Next apply the entropy chain rule to the conditional entropy of~\eqref{eq:acc_bound_2}:
\begin{equation}
\begin{aligned}
H\left(\mathbf{X}^{nl}\big|\mathbf{Z}^{n}\right)&=\sum_{i=1}^{n} H\left(\mathbf{X}_{(i-1)l+1}^{il}\big|\mathbf{X}^{(i-1)l},\mathbf{Z}^{n}\right)\\
&\overset{(f)}{=} \sum_{i=1}^{n} H\left(\mathbf{X}_{(i-1)l+1}^{il}\big|{Z}_i\right)\\
&\overset{(g)}{=} \sum_{i=1}^{n}H\left(\mathbf{X}_{(i-1)l+1}^{il}\Big|\sum_{j=1}^{l}X_{(i-1)l+j} \right)\\
&\overset{}{=}\sum_{i=1}^{n} \left( H\left(\mathbf{X}_{(i-1)l+1}^{il},\sum_{j=1}^{l}X_{(i-1)l+j}\right) -H\left(  \sum_{j=1}^{l}X_{(i-1)l+j} \right) \right)\\
&\overset{}{=}\sum_{i=1}^{n} \left( H\left(\mathbf{X}_{(i-1)l+1}^{il}\right) -H\left(  \sum_{j=1}^{l}X_{(i-1)l+j} \right) \right),
\end{aligned}
\label{eq:acc_chain_rule}
\end{equation}
where 
$(f)$, $(g)$ both follow from the fact that user's jobs within an accumulate interval consist of job patterns among $l$ $T$-clock periods, as given by
\begin{equation}
Z_{i}=\sum_{j=1}^{l} X_{(i-1)l+j}, \quad i=1,2,\cdots \frac{n}{l},
\end{equation}
which results from \eqref{eq:def_z} and~\eqref{eq:pattern_1}. 

Substituting~\eqref{eq:acc_chain_rule} back into~\eqref{eq:acc_bound_2}, and we have
\begin{equation}
\begin{aligned}
\mathcal{P}^{T}&\geq \frac{\sum_{i=1}^{n} \left( H\left(\mathbf{X}_{(i-1)l+1}^{il}\right) -H\left(  \sum_{j=1}^{l}X_{(i-1)l+j} \right) \right)}{nl}\\
& \overset{(h)}{=} H(X) - \frac{H\left(\sum_{i=1}^{l}X_i\right)}{l}
\end{aligned}
\end{equation}
where $X$ is  binomial $B(T,\lambda)$, and $(h)$ follows from the fact that $X_i$'s are i.i.d. binomial $B(T,\lambda)$.

See Appendix~\ref{app:complement} for the proof when $T$ does not divide $T_{acc}$. 
\end{proof}

\begin{theorem}
As the accumulate interval $T_{acc}\to \infty$, the user's privacy converges to
\begin{equation}
\mathcal{P}^{T}=H(X),
\label{eq:corollary}
\end{equation}
where $X$ is binomial $B(T,\lambda)$.
\end{theorem}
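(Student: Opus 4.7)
The plan is a sandwich argument. On one side, for any scheduler, accumulate-and-serve included, the privacy in~\eqref{eq:privacy} obeys the trivial ceiling $\mathcal{P}^{T}\leq H(X)$: by Definition~\ref{def:pattern} the pattern variables $X_1,X_2,\ldots$ are i.i.d.\ binomial $B(T,\lambda)$, so conditioning can only decrease entropy and
\begin{equation*}
\frac{H\!\left(\mathbf{X}^n \mid \mathbf{A}^m,\mathbf{D}^m\right)}{n} \;\leq\; \frac{H(\mathbf{X}^n)}{n} \;=\; H(X)
\end{equation*}
holds for every $n$ and every admissible attack $\mathbf{A}$. Taking the infimum over $\mathbf{A}$ and then $n\to\infty$ yields the ceiling.

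For the matching floor I would invoke Theorem~\ref{theo:acc_bound} and let $T_{acc}\to\infty$ term-by-term on its right-hand side. Since $T$ is held fixed, $T/T_{acc}\to 0$ trivially, and because $T_{L}=\mathrm{lcm}(T,T_{acc})\geq T_{acc}$ we also get $T/T_{L}\to 0$. Hence the coefficient of $H(X)$ tends to $1$, leaving the subtractive term
\begin{equation*}
\frac{T\cdot H\!\left(\sum_{i=1}^{\lfloor T_{acc}/T\rfloor} X_i\right)}{T_{acc}}
\end{equation*}
as the only thing to control. The sum inside is itself binomial with parameter $T\lfloor T_{acc}/T\rfloor \leq T_{acc}$, hence its support has size at most $T_{acc}+1$ and its entropy is bounded crudely by $\log_2(T_{acc}+1)$. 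With $T$ fixed, the entire subtractive term is therefore $O(T\log T_{acc}/T_{acc})\to 0$. Combining the floor with the ceiling forces $\mathcal{P}^{T}=H(X)$, which is exactly~\eqref{eq:corollary} and, as expected, coincides with the TDMA benchmark in~\eqref{eq:tdma}.

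The step I anticipate to be most delicate is not any individual inequality but rather the interchange of limits. The definition~\eqref{eq:privacy} of $\mathcal{P}^{T}$ already contains a minimum over attacks and a limit in $n$, and the statement now asks us to send $T_{acc}\to\infty$ in addition. The sandwich above sidesteps this concern, because the ceiling $\mathcal{P}^{T}\leq H(X)$ holds uniformly in $T_{acc}$, while Theorem~\ref{theo:acc_bound} has already encapsulated the effect of the attack and of the scheduler's internal $n\to\infty$ limit into a scalar lower bound on $\mathcal{P}^{T}$ that depends only on $T$, $\lambda$, and $T_{acc}$. Pointwise convergence of that scalar to $H(X)$ then immediately gives the claimed limit without any further exchange-of-limits bookkeeping.
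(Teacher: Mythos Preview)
Your proposal is correct and follows the same sandwich structure as the paper's proof: ceiling from the unconditioned entropy (equivalently the TDMA benchmark~\eqref{eq:tdma}), floor from Theorem~\ref{theo:acc_bound}, then term-by-term limits as $T_{acc}\to\infty$. The one substantive difference is how you kill the subtractive term: the paper invokes a Gaussian-type approximation to the binomial entropy, $H\bigl(B(n,\lambda)\bigr)=\tfrac12\log_2\bigl(2\pi e n\lambda(1-\lambda)\bigr)+O(1/n)$, whereas you use the cruder but fully elementary bound $H\leq\log_2(\text{support size})$. Both give $O(\log T_{acc})/T_{acc}\to 0$, so the conclusion is the same; your route avoids the external reference and the implicit assumption $0<\lambda<1$ needed for the approximation to be finite.
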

\begin{proof}
Consider the limit of each term in the bound of~\eqref{eq:acc_bound}, we have
\begin{equation}
\lim_{T_{acc}\to \infty}  \left(1-\frac{T}{T_{acc}}+\frac{T}{T_{L}}\right)H(X)  = H(X)
\label{eq:limit_1}
\end{equation}
and 
\begin{equation}
\begin{aligned}
 \lim_{T_{acc}\to\infty}\frac{T\cdot H\left(\sum_{i=1}^{\left\lfloor\frac{T_{acc}}{T}\right\rfloor}X_i\right)}{T_{acc}}\overset{(i)}{\leq} \lim_{T_{acc}\to\infty}\frac{\log_2\left(2\pi el'T\lambda(1-\lambda)\right)+O\left(\frac{1}{l'T}\right)}{l'}=0
\end{aligned}
\label{eq:limit_2}
\end{equation}
where $l'=\left\lfloor\frac{T_{acc}}{T}\right\rfloor$, and $(i)$ follows from the approximation for binomial entropy in~\cite[Theorem 3]{frank94}. 

Substituting~\eqref{eq:limit_1} and~\eqref{eq:limit_2} back to~\eqref{eq:acc_bound_2}, and taking the limit $T_{acc}\to \infty$, 
we have $\mathcal{P}^{T} \geq H(X).$
However from~\eqref{eq:tdma} we know that the privacy is upper-bounded by
$\mathcal{P}^{T} \leq  H(X).
\label{eq:acc_upper_bound}$
This completes the proof.
\end{proof}

\begin{figure}[t]
   \centering
   \includegraphics[width=0.7\columnwidth]{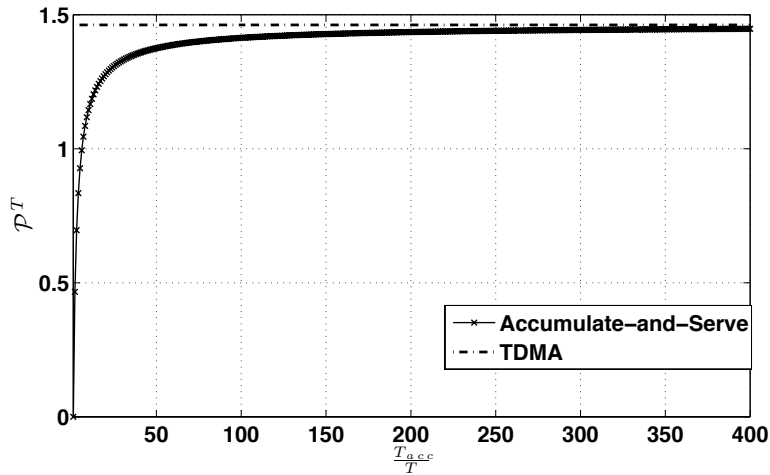} 
   \caption{A lower bound on privacy of the accumulate-and-serve scheduler, for $\lambda=0.4$, and $T=2$.}
   \label{fig:acc_bound}
\end{figure}

Figure~\ref{fig:acc_bound} illustrates the bound  in~\eqref{eq:acc_bound_2}. 
Not surprisingly,  the guaranteed privacy of the scheduler increases with the accumulate interval $T_{acc}$. When $T_{acc}$ is large enough, the attacker learns nearly nothing from the side channel, which leads to the same full privacy level achieved by the TDMA scheduler. The price of this added privacy is in QoS since the maximum extra queuing delay experienced by a job can be as high as $T_{acc}$.


\section{Conclusion}
\label{sec:con}

We study the information leakage through timing side channel in a job scheduler shared by a legitimate user and a malicious attacker. 
Utilizing the privacy metric defined as the equivocation of user's job arrival density, we reveal that the commonly used FCFS scheduler has a critical privacy flaw in that the attacker is able to learn exactly  user's job pattern. 
To mitigate the privacy leakage in such a scheduler, we introduce an accumulate-and-serve policy, which services jobs from the user and attacker in batches buffered during an accumulate interval. This much weakens the correlation between user's arrival process and attacker's departure process, albeit at the price of queuing delay. Our analysis indicates that full privacy can be achieved when large accumulate intervals are used.

\appendix

\subsection{Proof of Lemma~\ref{lem:stability}}
\label{app:system_dynamic}
When $\lambda+\omega<1$, the queue lengths observed at clock period boundaries $\{q(iT)\}, i=0, 1,,\cdots,$ form a positive recurrent Markov chain.
\begin{proof}
The Markovian property directly results from the FCFS policy and memoryless property of user's arrival process; 
given the queue length at time $iT$, $q(iT)$,  the future queue lengths are independent with the arrival history before $iT$. 

We show the ergodicity of this Markov chain using the linear Lyapunov function as given by 
\begin{equation}V\left (q(iT)\right)=q(iT), i=0,1,2,\cdots.\label{eq:lyapunov}\end{equation}

If $q(iT)\geq T-1$, the scheduler is guaranteed to be busy during $\left[iT, (i+1)T\right)$. Thus the queue length at time $(i+1)T$ is updated as\begin{equation}
\begin{aligned}
q\left((i+1)T\right)=q\left(iT\right)+1+a_i+x_i-T,
\label{eq:q_update}
\end{aligned}
\end{equation}
where `1' represents the {\em Type-I} attack job sent at $iT$, $a_i$ is the number of  {\em Type-II} attack jobs, and $x_i$ is the total number of user's jobs arriving during $[iT, (i+1)T)$.  $a_i$ and $x_i$ are both binomial with mean of $\omega T-1$ and $\lambda T$, respectively.
The drift of the Lyapunov function is then written by
\begin{equation}
\begin{aligned}
\mathbf{P}V(q(iT))-V(q(iT))=-(1-\omega-\lambda)T.   
\label{eq:drift_equal}
\end{aligned}
\end{equation}

Additionally, during one clock period $T$, the buffer queue length can grow at most by $T$, hence the drift is bounded by
\begin{equation}
\mathbf{P}V(q)-V(q)\leq T,  \quad  \forall q\geq 0.
\label{eq:drift_bound}
\end{equation}

Overall, combine~\eqref{eq:drift_equal} and~\eqref{eq:drift_bound}, the drift in any state satisfies
\begin{equation}
\mathbf{P}V(q)-V(q)\leq-\epsilon+TI_{\{q<T\}},
\label{eq:drift_final}
\end{equation}
where $\epsilon=(1-\omega-\lambda)T$,
and $I$ is an indicator function taking value of `1' if $q<T$. 
Following from Foster-Lyapunov stability criterion~\cite[Theorem~5]{Foster}, \eqref{eq:drift_final} implies the Markov chain $\{q(iT)\}, i=1,2,\cdots,$ is positive recurrent. 
\end{proof}

\subsection{Proof of Corollary~\ref{cor:stability}}
\label{app:system_dynamic_2}
When $\omega+\lambda<1$, the pairs 
$\left\{\tilde{A}_k, q(\tilde{A}_k); (i-1)T\leq \tilde{A}_k \leq iT\right\}, i=1,2,\cdots,$ form a positive recurrence Markov chain.  
\begin{proof}
Similar as the proof of Lemma~\ref{lem:stability} in~Appendix~\ref{app:system_dynamic}, the Markovian property directly results from the FCFS service policy. 
We only need to show the positive recurrent part. 

Notice that outgoing transitions from a state 
$\left\{\tilde{A}_k, q(\tilde{A}_k); (i-1)T\leq \tilde{A}_k \leq iT\right\}$ depend only on the last element in this state, $q(iT)$.
The transition probabilities to the next state depend on job arrival events in the next clock period $[iT, (i+1)T)$, which are homogenous among all clock periods. As a result, given the stationary distribution of $q(iT)$, the existence of which is guaranteed by  Lemma~\ref{lem:stability}, we can easily compute a 
a stationary distribution for $\left\{\tilde{A}_k, q(\tilde{A}_k); (i-1)T\leq \tilde{A}_k \leq iT\right\}$. 
The existence of a stationary distribution implies that 
 the Markov chain $\left\{\tilde{A}_k, q(\tilde{A}_k); (i-1)T\leq \tilde{A}_k \leq iT\right\}, i=1,2,\cdots,$ must be positive recurrent~\cite[Definition~3.1]{gilks95}. 
\end{proof}

\subsection{Complement of Proof of Lemma~\ref{lem:queue_length}}
\label{app:queue_length}
In this section, we analyze the stationary distribution of the queue length of the FCFS scheduler, where the attacker issues the two types of jobs as depicted in Figure~\ref{fig:attack_strategy}. 
Specifically, we study the high traffic region, where the attacker's job rate approaches its maximum, i.e.,
$\omega\to 1-\lambda$.  

\begin{lemma}
In the stationary state, queue lengths seen by {\em Type-I} attack jobs are always greater than $T-1$, i.e., 
\begin{equation}
\lim_{\omega\to 1-\lambda} Pr\left(Q_1\geq T-1\right) =1. 
\label{eq:temp5}
\end{equation}
where $Q_1$ takes the stationary distribution of states in the Markov chain  $\{q(iT)\}, i=0, 1,\cdots$. 
\label{lem:app_1}
\end{lemma}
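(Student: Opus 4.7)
The plan is a heavy-traffic argument: as $\omega\to 1-\lambda$, the offered load $\omega+\lambda$ approaches the service rate $1$, so the stationary queue at clock boundaries must escape every bounded set with probability tending to $1$. I will combine three ingredients: stochastic monotonicity of $Q_1$ in $\omega$, pointwise passage to a limiting invariant measure, and non-existence of a finite invariant probability measure at the critical load.

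First I would establish monotonicity. For $\omega<\omega'<1-\lambda$, couple two systems by driving them with identical user arrivals and identical Type-I attacker arrivals, and by realising the two Type-II streams through the natural monotone coupling of Bernoulli random variables so that every arrival in the $\omega$-system is also an arrival in the $\omega'$-system. Since the slot-level queue dynamics (and hence the clock-boundary recursion~\eqref{eq:q_update_2}) are non-decreasing in the arrival stream, $q^{\omega}(iT)\le q^{\omega'}(iT)$ holds pathwise, and the stationary law $\pi_\omega$ of $Q_1$ is therefore stochastically non-decreasing in $\omega$. In particular $F_\omega(K):=\pi_\omega(\{0,\ldots,K\})$ is non-increasing in $\omega$, the monotone limit $F^{\star}(K):=\lim_{\omega\to 1-\lambda}F_\omega(K)$ exists in $[0,1]$, and the PMFs converge pointwise to $\pi^{\star}(j):=F^{\star}(j)-F^{\star}(j-1)\ge 0$.

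Next I would pass to the limit in the invariance equation. One clock period can move the queue down by at most $T$ and up by at most $2T$ (per-period arrivals are bounded by $1+(T-1)+T=2T$), so for each $j$ the set $\{i:P_\omega(i,j)>0\}$ sits in the finite window $\{\max(0,j-2T),\ldots,j+T\}$ and the invariance identity
\[
\pi_\omega(j)=\sum_{i=\max(0,j-2T)}^{j+T}\pi_\omega(i)\,P_\omega(i,j)
\]
is a \emph{finite} sum whose entries $P_\omega(i,j)$ depend continuously on $\omega$. Sending $\omega\to 1-\lambda$ yields $\pi^{\star}(j)=\sum_i\pi^{\star}(i)\,P_{1-\lambda}(i,j)$, so $\pi^{\star}$ is invariant on $\mathbb{N}_0$ for the critical chain obtained at $\omega=1-\lambda$. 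At this critical load the drift in~\eqref{eq:drift_equal} is exactly $0$ whenever $q\ge T$, while the bounded reflection at $q<T$ keeps the chain irreducible with bounded increments; standard results for reflected random walks with zero mean drift and finite variance then give that the critical chain is null recurrent, so its unique (up to scaling) $\sigma$-finite invariant measure on $\mathbb{N}_0$ has infinite total mass. Consequently the only finite invariant measure is the zero measure; hence $\pi^{\star}\equiv 0$, $F^{\star}(K)=0$ for every $K$, and taking $K=T-2$ gives $\lim_{\omega\to 1-\lambda}Pr(Q_1\ge T-1)=1$, which is~\eqref{eq:temp5}.

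The delicate step is the null-recurrence claim at the critical load: one must simultaneously rule out positive recurrence \emph{and} transience. I would verify it using the linear test function $V(q)=q$, whose mean drift is exactly zero for $q\ge T$ and uniformly bounded for $q<T$: Foster's converse then excludes positive recurrence, and a Lamperti-type criterion (finite second moment of the per-period net increment combined with vanishing mean drift) excludes transience.
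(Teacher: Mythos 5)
Your argument is correct, but it takes a genuinely different route from the paper. The paper's proof replaces the real attack with a \emph{virtual} bursty attack that dumps $1+a_i$ jobs exactly at each clock tick, derives the $z$-transform of the resulting Lindley recursion, sets $z=1$ to obtain the explicit bound $\sum_{k=0}^{T-2}p_k\le\frac{(T-1)^{T-1}(1-\omega-\lambda)}{T^{T-2}(1-\omega)^{T-1}(1-\lambda)^{T}}$, and then transfers the conclusion to the real attack via a pathwise stochastic-dominance coupling (Lemma~\ref{lem:dominance}). Your proof instead exploits a different monotonicity — stochastic monotonicity of $Q_1$ in the attacker's \emph{rate} $\omega$ — and combines it with weak convergence of the stationary laws plus the structural fact that the limiting chain at critical load $\omega=1-\lambda$ is null recurrent (zero drift above the reflecting boundary, bounded increments, Lamperti-type criterion), so the limiting sub-probability measure $\pi^\star$ must be the zero measure. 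Both arguments are sound. What the paper's approach buys is an explicit, quantitative rate at which $\Pr(Q_1<T-1)$ vanishes in $1-\omega-\lambda$, which is potentially useful if one later wants a finite-$\omega$ privacy bound; what your approach buys is robustness and conceptual economy — no generating-function algebra, no auxiliary process, and a template that transfers directly to other scheduling disciplines or arrival models as long as the boundary-sampled queue is a Lindley-type recursion with computable drift. The one place to be careful in your write-up is the null-recurrence step: there is no single theorem called ``Foster's converse,'' so you should cite a concrete non-positive-recurrence criterion (e.g.\ a Kaplan/Tweedie-type condition using $V(q)=q$ with zero drift and bounded jumps) and a concrete recurrence criterion (Lamperti, or the classical result for zero-mean, finite-variance reflected random walks). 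You should also note explicitly that $\pi^\star$ is a \emph{sub}-probability measure (Fatou gives $\sum_j\pi^\star(j)\le 1$) and that an irreducible null-recurrent chain has no nonzero finite invariant measure — that is exactly the hypothesis needed to conclude $\pi^\star\equiv 0$.
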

\begin{proof}
We prove this lemma with three steps;
we first construct a `virtual' attack strategy which only issues bursty jobs on clock period boundaries. 
We next prove that the statement of the lemma holds for this virtual attack. 
Last, we show that queue length distribution in the virtual attack is dominated by our real attack defined in Figure~\ref{fig:attack_strategy}, which implies the statement in this lemma holds for the real attack.

\begin{figure}[t]
   \centering
   \includegraphics[width=0.7\columnwidth]{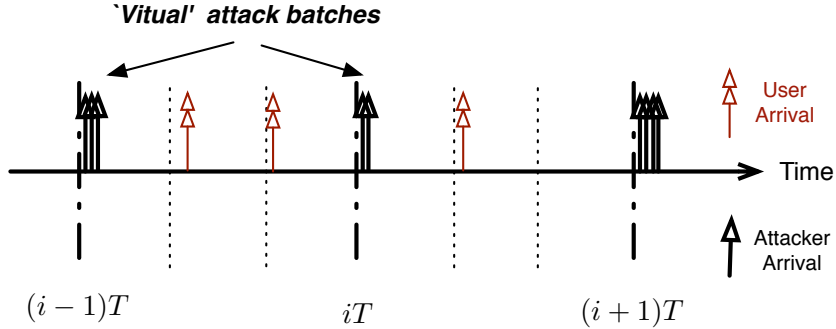} 
   \caption{
   A `virtual' attack strategy. The attacker issues jobs only on the boundaries of clock periods. At each clock tick, an amount of $1+a_i, i=0,1,\cdots$ attack jobs are sent, where $a_i\sim B\left(T-1, \frac{\omega T-1}{T-1}\right).$}
   \label{fig:virtual_attack}
\end{figure}

{\em Step 1:} Consider a virtual attack strategy that works as follows: the attacker issues a batch of bursty jobs at the beginning slot of each clock period with total number of $1+a_i, i=0,1,\cdots$, where $a_i$ is binomial $B\left(T-1, \frac{\omega T-1}{T-1}\right)$.
This attack issues the same amount of jobs in each clock period as our real attack in Figure~\ref{fig:attack_strategy}, but is not feasible in reality as the attacker cannot send more than one job in one time slot.

{\em Step 2:} Denote the queue length function under this virtual attack by $\hat{q}(\cdot)$. 
At the beginning of each clock period, the queue  length updates as
\begin{equation}
\hat{q}((i+1)T)=\left ( \hat{q}(iT)+1+ a_i +X_i-T \right )_+, \quad i=0,1,\cdots.
\label{eq:virtual_queue}
\end{equation}
Using the same Lyapunov function as we define in \eqref{eq:lyapunov} of the proof of Lemma~\ref{lem:stability}, it is not hard show that  $\{\hat{q}(iT)\}, i=0, 1,\cdots$ form a positive recurrent Markov chain when $\lambda+\omega<1$. 
Define $\hat{Q}_1$ as a random variable with the stationary distribution of this chain, we now prove
\begin{equation}
\lim_{\omega\to 1-\lambda} Pr\left(\hat{Q}_1\geq T-1\right) =1
\label{eq:temp4}
\end{equation}
using the $z$-transform of sequence $\{\hat{q}(iT)\}$,  derived from~\eqref{eq:virtual_queue} as given by
\begin{equation}
\begin{aligned}
&\hat{\mathcal{Q}}(z)=\frac{\sum_{k=0}^{T-2}\sum_{r=0}^{T-2-k}\sum_{o=0}^{T-2-k-r}p_k u_r v_o(z^{T-1}-z^{k+r+o})}{z^{T-1}-\mathcal{A}(z)\mathcal{X}(z)}
\label{eq:z-trans}
\end{aligned}
\end{equation}
where  $p_k=Pr\left(\hat{Q}_1=k\right)$, $u_r=Pr\left(a_i=r\right)$, and $v_o=Pr\left(X_i=o\right)$. Moreover,
$\mathcal{A}(z)$ and $\mathcal{X}(z)$ are the $z$-transforms of sequence $\{a_i\}$ and $\{X_i\}$, and \begin{equation}
\mathcal{A}(z)=\left(1- \frac{\omega T-1}{T-1}+ \frac{\omega T-1}{T-1}z\right)^{T-1}\label{eq:z_a}
\end{equation} and \begin{equation} \mathcal{X}(z)=(1-\lambda+\lambda z)^{T}.\label{eq:z_x}\end{equation}

Subsituting~\eqref{eq:z_a},~\eqref{eq:z_x} into~\eqref{eq:z-trans} and taking $z=1$ on both sides, we get
\begin{equation}
\begin{aligned}
\sum_{k=0}^{T-2} p_k \cdot \left(  \sum_{r=0}^{T-2-k}\sum_{o=0}^{T-2-k-r} u_rv_o(T-1-(k+r+o)) \right)
=T(1-\omega-\lambda).
\label{eq:temp1}
\end{aligned}
\end{equation}
Dropping the terms with $r>0$ or $o>0$ on the left hand side of the equality, we further get
\begin{equation}
\begin{aligned}
 u_0v_0(T-1-k) \sum_{k=0}^{T-2} p_k \leq T(1-\omega-\lambda). 
\label{eq:temp2}
\end{aligned}
\end{equation}
Plugging in the values of $u_0$ and $v_0$ in~\eqref{eq:temp2}, 
\begin{equation}
 \sum_{k=0}^{T-2}p_k \leq  \frac{(T-1)^{T-1}(1-\omega-\lambda)}{T^{T-2}(1-\omega)^{T-1}(1-\lambda)^{T}} 
\label{eq:temp3}
\end{equation}
Taking the limit $\omega\to 1-\lambda$, we get
\begin{equation}\lim_{\omega\to 1-\lambda}\sum_{k=0}^{T-2} p_k \leq \lim_{\omega\to 1-\lambda}  \frac{(T-1)^{T-1}(1-\omega-\lambda)}{T^{T-2}(1-\omega)^{T-1}(1-\lambda)^{T}} = 0.\end{equation}
This completes the proof of~\eqref{eq:temp4}.

{\em Step 3:} 
We next extend~\eqref{eq:temp4} to the case of our real attack, based on the fact that  the queuing process in the real attack dominates the queuing process in the virtual attack (See Lemma~\ref{lem:dominance} for the proof). 
Define $Q_1$ as a random variable taking the stationary distribution of states $\{q(iT)\}, i=0,1,\cdots$. Lemma~\ref{lem:dominance} tells us 
\begin{equation}
\lim_{\omega\to 1-\lambda} Pr\left(Q_1\geq T-1\right) \geq \lim_{\omega\to1-\lambda} Pr(\hat{Q}_1\geq T-1).
\label{eq:temp10}
\end{equation}
Plug~\eqref{eq:temp10} into~\eqref{eq:temp4}, \eqref{eq:temp5} is proved. 
\end{proof}

\begin{lemma}
The stationary distribution of the Markov chain $\{\hat{q}(iT)\}, i=0,1,\cdots$ in the virtual 
attack is dominated by the stationary distribution of the  Markov chain $\{q(iT)\}, i=0,1,\cdots$ in the real attack; i.e.,
\begin{equation}
Pr\left(Q_1\geq q\right)\geq Pr\left(\hat{Q}_1\geq q\right), \quad  \forall q\geq 0,
\label{eq:dominance}
\end{equation}
where $Q_1$ and $\hat{Q}_1$ are random variables taking the stationary distributions of $\{q(iT)\}, i=0,1,\cdots$ and $\{\hat{q}(iT)\}, i=0,1,\cdots$, respectively. 
\label{lem:dominance}
\end{lemma}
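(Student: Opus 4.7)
The plan is to establish \eqref{eq:dominance} by a pathwise coupling argument. I would couple the two systems so that they share exactly the same realization of user arrivals and the same multiset of Type-II attacker arrivals; the only difference is that in the virtual system, all $a_i$ Type-II jobs for period $i$ are moved forward in time to the tick $iT$ and bundled with the Type-I arrival, while in the real system they remain spread out across $[iT,(i+1)T)$ as in Figure~\ref{fig:attack_strategy}. I would then show by induction on $i$ that under this coupling $q(iT)\geq \hat q(iT)$ for every $i$, started from a common initial state. Stochastic dominance of the stationary distributions then follows from letting $i\to\infty$, since both chains are positive recurrent.

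Reducing to the one-step comparison, assume the inductive hypothesis $q(iT)=\hat q(iT)=q_0$ (the general case $q(iT)\geq\hat q(iT)$ follows by monotonicity of the common one-step queueing map in the initial backlog, together with the same argument). Inside $[iT,(i+1)T]$ let $\tilde W_v(t),\tilde W_r(t)$ be the cumulative work (initial backlog plus arrivals) seen by the virtual and real systems by time $t$, and let $S_v(t),S_r(t)$ be the cumulative services. Because Type-II jobs are front-loaded in the virtual system while the user stream is identical, one has $\tilde W_v(t)\geq \tilde W_r(t)$ for all $t\in[iT,(i+1)T)$, with equality at $t=(i+1)T$ since the totals over the period agree. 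Given equality of totals at the endpoint, the desired inequality $q((i+1)T)\geq \hat q((i+1)T)$ reduces exactly to showing $S_v((i+1)T)\geq S_r((i+1)T)$.

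The core step, and the one I expect to be the main obstacle, is a slot-by-slot induction showing $S_v(t)\geq S_r(t)$ for every slot boundary $t\in[iT,(i+1)T]$. At a slot, $S_v$ increments iff $\hat q_v(t)>0$ and $S_r$ increments iff $q_r(t)>0$; three of the four combinations trivially preserve the invariant $S_v\geq S_r$. The dangerous case is $\hat q_v(t)=0$ with $q_r(t)>0$, where $S_r$ increments but $S_v$ does not, decrementing the gap by one. However, writing out the two identities $\hat q_v(t)=\tilde W_v(t)-S_v(t)=0$ and $q_r(t)=\tilde W_r(t)-S_r(t)>0$ and subtracting gives $S_v(t)-S_r(t)>\tilde W_v(t)-\tilde W_r(t)\geq 0$, so the gap was already strictly positive before the slot and hence remains nonnegative after. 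This closes the slot-by-slot induction, and therefore the period-by-period induction, so $q(iT)\geq \hat q(iT)$ along every coupled sample path. Passing to the stationary regime yields $\Pr(Q_1\geq q)\geq \Pr(\hat Q_1\geq q)$ for every $q\geq 0$, which is \eqref{eq:dominance}.
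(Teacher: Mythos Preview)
Your coupling argument is correct, and it reaches the same pathwise conclusion $q(iT)\geq \hat q(iT)$ that the paper derives; the route, however, is different. The paper does not go slot-by-slot. Instead it works only at attacker arrival epochs: from the Lindley recursion
\[
q(\tilde A_{j+1})=\bigl(q(\tilde A_j)+1+\hat X_j-(\tilde A_{j+1}-\tilde A_j)\bigr)_+
\]
it uses $(x)_+\geq x$ to telescope across all attacker arrivals in $[iT,(i+1)T]$, obtaining
\[
q((i+1)T)\;\geq\;\bigl(q(iT)+1+a_i+X_i-T\bigr)_+,
\]
which is exactly the one-step update of the virtual chain $\hat q$. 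Induction on $i$ from a common initial state then gives $q(iT)\geq \hat q(iT)$ pathwise, and hence \eqref{eq:dominance} in stationarity.

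What the two routes buy: the paper's telescoping argument is shorter and purely algebraic once the per-arrival recursion is written down; it avoids your case analysis and the separate monotonicity reduction for unequal starting points. Your cumulative-service comparison is the classical sample-path proof that front-loading arrivals in a work-conserving single server can only decrease the end-of-period backlog, so it is more transparent as a queueing principle and would apply unchanged to variants where the per-arrival Lindley form is less convenient. Either approach is fully adequate here.
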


\begin{proof}
Recall in the real attack strategy, the queue length seen by each attacker's job updates as
\begin{equation}
q(\tilde{A}_{j+1})=\left(q(\tilde{A}_{j})+1+\hat{X}_{j}-(\tilde{A}_{j+1}-\tilde{t}_{j})\right)_{+}, j=1,2,\cdots.
\label{eq:temp15}
\end{equation}
where $\hat{X}_k$ is the number of user's jobs arriving between $\tilde{A}_j$ and $\tilde{A}_{j+1}$. 
Consider~\eqref{eq:temp15} for $j$ taking values from $k$ to $r-1$, and sum up all the resulting equations, we derive the inequality that
\begin{equation}
q(\tilde{A}_{r})\geq \left(q(\tilde{A}_{k})+r-k+1+\sum_{j=k}^{r-1}\hat{X}_{j}-(\tilde{A}_{r}-\tilde{A}_{k})\right)_{+}.
\label{eq:temp6}
\end{equation}

Now make $k=\inf\{j:iT\leq \tilde{A}_j\leq (i+1)T\}$ and $r=\sup\{j:iT\leq \tilde{A}_j\leq (i+1)T\}$, i.e., indices of the attacker's jobs sent at time $iT$ and $(i+1)T$, we get
\begin{equation}
q((i+1)T) \geq  \left( q(iT) +1+ a_i +X_i -T \right)_+ 
\label{eq:temp7}
\end{equation}
where  $X_i=\underset{j:iT\leq \tilde{A}_j< (i+1)T}{\sum} \hat{X}_j$ is the number of user's jobs arriving in the $i^{th}$ clock period, and 
$a_i=|\{j:iT< \tilde{A}_j<(i+1)T\}|$ is the number of {\em Type-II} jobs sent by the attacker in the $i^{th}$ clock period.

Compare~\eqref{eq:temp7} with the queue length update equation for the virtual attack in~\eqref{eq:virtual_queue}, we can show by induction that  ${q}(iT)\geq \hat{q}(iT)$ for $i=1,2,\cdots$, assuming $q(0)=\hat{q}(0)=0$, which implies~\eqref{eq:dominance}.
\end{proof}

\subsection{Continuation of proof of Theorem~\ref{theo:acc_bound}}
\label{app:complement}
\begin{equation}
\mathcal{P}^{T}\geq \left(1-\frac{T}{T_{acc}}+\frac{T}{T_{L}}\right)H(X) - \frac{TH\left(\sum_{i=1}^{\left\lfloor\frac{T_{acc}}{T}\right\rfloor}X_i\right)}{T_{acc}}
\label{eq:acc_bound_0}
\end{equation}
where $T_{L}=lcm\left(T,T_{acc}\right)$, where $X, X_1,\cdots,X_{\left\lfloor\frac{T_{acc}}{T}\right\rfloor}$ are are i.i.d. binomial $B(T,\lambda).$

We give the proof of~\eqref{eq:acc_bound_0} when $T$ does not divide $T_{acc}$. 

\begin{proof}
From Lemma~\ref{lem:acc_chain}, the privacy of the accumulate-and-serve scheduler is lower-bounded by
\begin{equation}
\mathcal{P}^{T}\geq \lim_{n\to \infty} \frac{H\left(\mathbf{X}^{\frac{nT_L}{T}}|\mathbf{Z}^{\frac{nT_{L}}{T_{acc}}}\right)}{\frac{nT_{L}}{T}}
\label{eq:acc_bound_3}
\end{equation}
where $T_{L}=lcm(T_{acc},T)$.

\begin{figure*}[t]
   \centering
   \includegraphics[width=\textwidth]{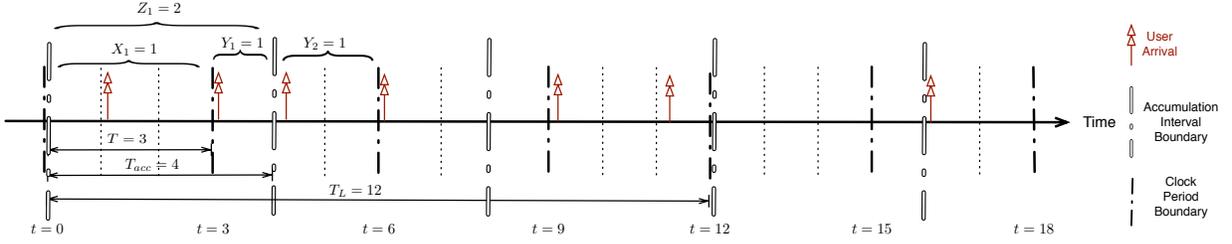} 
   \caption{Illustration of the accumulate-and-serve scheduler when $T=3$ and $T_{acc}=4$. The boundaries of the clock periods and accumulate intervals overlap every $12$ time slots. Some job pattern is split by the accumulate interval boundaries into two pieces, e.g., $X_2=Y_1+Y_2.$}
   \label{fig:acc_proof_1}
\end{figure*}

Notice that the clock period boundaries and accumulate interval boundaries  overlap every $T_L$ time slots; i.e, 
\begin{equation}
\sum_{j=\frac{iT_{L}}{T}+1}^{\frac{(i+1)T_{L}}{T}}X_j = \sum_{j=\frac{iT_{L}}{T_{acc}}+1}^{\frac{(i+1)T_{L}}{T_{acc}}}Z_j, \quad i=0,1,\cdots.
\end{equation}
Thus, \eqref{eq:acc_bound_3} can be rewritten as 
\begin{equation}
\begin{aligned}
\mathcal{P}^{T}\geq \lim_{n\to \infty} \frac{\sum_{i=0}^{n-1}H\left(\mathbf{X}_{\frac{iT_{L}}{T}+1}^{\frac{(i+1)T_{L}}{T}}\Big|\mathbf{Z}_{\frac{iT_{L}}{T_{acc}}+1}^{\frac{(i+1)T_{L}}{T_{acc}}}\right)}{\frac{nT_{L}}{T}}\overset{(a)}{=}\frac{H\left(\mathbf{X}^{\frac{T_{L}}{T}}\big| \mathbf{Z}^{\frac{T_{L}}{T_{acc}}}\right)}{\frac{T_L}{T}}
\end{aligned}
\label{eq:acc_bound_4}
\end{equation}
where $(a)$ follows from the fact that both sequence $\mathbf{X}$ and $\mathbf{Z}$ are i.i.d. binomial random variables. 

Among the first $\frac{T_L}{T}$ clock periods, $\frac{T_{L}}{T_{acc}}-1$ clock periods lie across two accumulate intervals.
For example, in Figure~\ref{fig:acc_proof_1}, where $T=3$ and $T_{acc}=4$, the 
$2^{nd}$ and $3^{rd}$ $T$-clock period cross two accumulate intervals. 
Denote $\mathbf{Y}^{2\left(\frac{T_{L}}{T_{acc}}-1\right)}$ to be the number of users's jobs in split clock periods, 
\begin{equation}
Y_{2(j-1)+1}+Y_{2j}=X_{\left \lceil \frac{jT_{acc}}{T} \right\rceil}, \quad j=1,2\cdots,\frac{T_{L}}{T_{acc}}-1,
\end{equation}
and assign $\mathbf{Y}^{2\left(\frac{T_{L}}{T_{acc}}-1\right)}$ to the  attacker as extra information,  we get
\begin{equation}
\begin{aligned}
H\left(\mathbf{X}^{\frac{T_{L}}{T}}\big| \mathbf{Z}^{\frac{T_{L}}{T_{acc}}}\right)& \geq H\left(\mathbf{X}^{\frac{T_{L}}{T}}\Big| \mathbf{Z}^{\frac{T_{L}}{T_{acc}}}, \mathbf{Y}^{2\left(\frac{T_{L}}{T_{acc}}-1\right)}\right)  \\
&\overset{(b)}{=} \sum_{j=1}^{\frac{T_L}{T_{acc}}} H\left( \mathbf{X}^{\left \lceil \frac{jT_{acc}}{T} \right\rceil-1}_{\left \lceil \frac{(j-1)T_{acc}}{T} \right\rceil+1} \Big| Z_j, Y_{2j}, Y_{2j+1} \right)\\
&\overset{(c)}= \sum_{j=1}^{\frac{T_L}{T_{acc}}} H\left( \mathbf{X}^{\left \lceil \frac{jT_{acc}}{T} \right\rceil-1}_{\left \lceil \frac{(j-1)T_{acc}}{T} \right\rceil+1} \Big| \sum^{\left \lceil \frac{jT_{acc}}{T} \right\rceil-1}_{k=\left \lceil \frac{(j-1)T_{acc}}{T} \right\rceil+1} X_k\right)\\
&\overset{(d)}{=} \sum_{j=1}^{\frac{T_L}{T_{acc}}}\left( H\left(\mathbf{X}^{\left \lceil \frac{jT_{acc}}{T} \right\rceil-1}_{\left \lceil \frac{(j-1)T_{acc}}{T} \right\rceil+1}\right)- H\left(\sum^{\left \lceil \frac{jT_{acc}}{T} \right\rceil-1}_{k=\left \lceil \frac{(j-1)T_{acc}}{T} \right\rceil+1} X_k\right)\right)\\
&\overset{(e)}{=} \left(\frac{T_L}{T}-\frac{T_L}{T_{acc}}+1 \right) H(X)- \sum_{j=1}^{\frac{T_L}{T_{acc}}}H\left(\sum^{\left \lceil \frac{jT_{acc}}{T} \right\rceil-1}_{k=\left \lceil \frac{(j-1)T_{acc}}{T} \right\rceil+1} X_k\right)\\
&\overset{(f)}{\geq} \left(\frac{T_L}{T}-\frac{T_L}{T_{acc}}+1 \right) H(X)- \frac{T_L}{T_{acc}}H\left(\sum_{k=1}^{\left
\lfloor\frac{T_{acc}}{T}\right\rfloor}X_k\right)\\
\end{aligned}
\label{eq:acc_bound_5}
\end{equation}
where $X, X_1,\cdots, X_{\left
\lfloor\frac{T_{acc}}{T}\right\rfloor}$ are i.i.d. $B(T,\lambda)$, $(b)$ applies the chain rule and dependencies between $\mathbf{X}$ and $\mathbf{Z},\mathbf{Y}$,
$(c)$ follows from
\begin{equation}
Z_j= Y_{2j}+Y_{2j+1}+\sum^{\left \lceil \frac{jT_{acc}}{T} \right\rceil-1}_{k=\left \lceil \frac{(j-1)T_{acc}}{T} \right\rceil+1} X_k, \quad j=1,2,\cdots, \frac{T_L}{T_{acc}},
\end{equation}
$(d)$  results from the fact that variables in sequence $\mathbf{X}$ are i.i.d., $(e)$ makes use of 
\begin{equation}
 \sum_{j=1}^{\frac{T_L}{T_{acc}}} \left({\left \lceil \frac{jT_{acc}}{T} \right\rceil} - \left \lceil \frac{(j-1)T_{acc}}{T} \right\rceil -1\right) =\left(\frac{T_L}{T}-\frac{T_L}{T_{acc}}+1 \right),
\end{equation} 
and $(f)$  follows from
\begin{equation}
{\left \lceil \frac{jT_{acc}}{T} \right\rceil} - \left \lceil \frac{(j-1)T_{acc}}{T} \right\rceil -1 \leq \left \lfloor \frac{T_{acc}}{T} \right\rfloor.
\end{equation}

Substituting~\eqref{eq:acc_bound_5} back in~\eqref{eq:acc_bound_4}, \eqref{eq:acc_bound_0} is proved.
\end{proof}

\bibliographystyle{IEEETran}
\bibliography{xun}

\end{document}